%
\documentclass[runningheads]{llncs}
\usepackage{graphicx}
%
\usepackage{booktabs}   
\usepackage{float}
\usepackage{xspace}
\usepackage{wrapfig}
\usepackage{subcaption}
\usepackage[shortlabels]{enumitem}

\captionsetup{compatibility=false}

\usepackage{cmap}

\usepackage{shuffle}
\usepackage{syntax}

\usepackage[square,sort,comma,numbers]{natbib}
\usepackage{url}
\usepackage{amsmath} 
\usepackage{amssymb} 
\usepackage{amsfonts}

\usepackage{etoolbox}
\usepackage{mathpartir}
\usepackage{mathtools}
\usepackage{refcount}
\usepackage{stmaryrd}
\usepackage{enumitem}
\usepackage{listings}
\usepackage{caption}
\usepackage{flushend}

\captionsetup{singlelinecheck=false}
\usepackage{array}
\newcolumntype{L}[1]{>{\raggedright\let\newline\\\arraybackslash\hspace{0pt}}m{#1}}
\newcolumntype{C}[1]{>{\centering\let\newline\\\arraybackslash\hspace{0pt}}m{#1}}
\newcolumntype{R}[1]{>{\raggedleft\let\newline\\\arraybackslash\hspace{0pt}}m{#1}}
\newcommand{\tuple}[1]{\left\langle#1\right\rangle}

\usepackage{scalerel}
\usepackage{cancel} 

\usepackage{pifont}
\newcommand{\cmark}{\textcolor{green}{\ding{51}}}%
\newcommand{\xmark}{\textcolor{red}{\ding{55}}}%

\usepackage{pgf}
\usepackage{tikz}
\usetikzlibrary{arrows,automata}
\usetikzlibrary{arrows.meta}
\usepackage{scalefnt}

\usepackage{wrapfig}
\usepackage{lipsum}  

\usepackage{algpseudocode}
\makeatletter
\renewcommand{\ALG@beginalgorithmic}{\footnotesize}
\makeatother
\usepackage{eqnarray}
\usepackage{alltt}
\usepackage{textcomp}
\usepackage{wrapfig,lipsum}
\usepackage{mathsprograms}

\usepackage{listings}
\lstset{columns=fullflexible,keepspaces=true,basicstyle=\ttfamily\footnotesize}
\lstset{commentstyle=\color{gray}}

\lstdefinelanguage{JavaScript}{%
  keywords = { async, await, break, case, catch, class, const, continue, debugger, default, delete, do, each, else, export, finally, for, function, if, import, in, instanceof, let, new, of, return, switch, this, throw, try, typeof, var, void, while, with, yield },
  morecomment = [l]{//},
  morecomment = [s]{/*}{*/},
  morestring  = [b]',
  morestring  = [b]",
  sensitive   = true,
}

\lstdefinelanguage{Java10}{
  language      = Java,
  morekeywords  ={ var },
}

\begin{document}

\setlength{\abovedisplayskip}{0.1pt}
\setlength{\belowdisplayskip}{3pt}

\title{Checking Robustness Against Snapshot Isolation\thanks{This work is supported in part by the European Research Council (ERC) under the Horizon 2020 research and innovation programme (grant agreement No 678177).}}
%
%
\author{Sidi Mohamed Beillahi \and
Ahmed Bouajjani \and
Constantin Enea}
\authorrunning{S.M. Beillahi, A. Bouajjani, and C. Enea.}
%
\institute{Universit\'e de Paris, IRIF, CNRS, Paris, France,
\email{\{beillahi,abou,cenea\}@irif.fr}}
\maketitle              
%
\begin{abstract}
Transactional access to databases is an important abstraction allowing programmers to consider blocks of actions (transactions) as executing in isolation. The strongest consistency model is {\em serializability}, which ensures the atomicity abstraction of transactions executing over a sequentially consistent memory. Since ensuring serializability carries a significant penalty on availability, modern databases provide weaker consistency models, one of the most prominent being \emph{snapshot isolation}. In general, the correctness of a program relying on serializable transactions may be broken when using weaker models. However, certain programs may also be insensitive to consistency relaxations, i.e., all their properties holding under serializability are preserved even when they are executed over a weak consistent database and without additional synchronization.

In this paper, we address the issue of verifying if a given program is {\em robust against snapshot isolation}, i.e., all its behaviors are serializable even if it is executed over a database ensuring snapshot isolation. We show that this verification problem is polynomial time reducible to a state reachability problem in transactional programs over a sequentially consistent shared memory. This reduction opens the door to the reuse of the classic verification technology for reasoning about weakly-consistent programs. In particular, we show that it can be used to derive a proof technique based on Lipton's reduction theory that allows to prove programs robust.
\end{abstract}
%
%
%




\everymath{\displaystyle}
\section{Introduction}
Transactions simplify concurrent programming by enabling computations on shared data that are isolated from other concurrent computations and resilient to failures. Modern databases provide transactions in various forms corresponding to different tradeoffs between consistency and availability. The strongest consistency level is achieved with \emph{serializable} transactions~\cite{DBLP:journals/jacm/Papadimitriou79b} whose outcome in concurrent executions is the same as if the transactions were executed atomically in some order. Since serializability carries a significant penalty on availability, modern databases often provide weaker consistency models, one of the most prominent being {\em snapshot isolation} (SI)~\cite{DBLP:conf/sigmod/BerensonBGMOO95}.
Then, an important issue is to ensure that the level of consistency needed by a given program coincides with the one that is guaranteed by its infrastructure, i.e., the database it uses. One way to tackle this issue is to investigate the problem of checking {\em robustness} of programs against consistency relaxations: Given a program $P$ and two consistency models $S$ and $W$ such that $S$ is stronger than $W$, we say that $P$ is robust for $S$ against $W$ if for every two implementations $I_S$ and $I_W$ of $S$ and $W$ respectively, the set of computations of $P$ when running with $I_S$ is the same as its set of computations when running with $I_W$. This means that $P$ is not sensitive to the consistency relaxation from $S$ to $W$, and therefore it is possible to reason about the behaviors of $P$ assuming that it is running over $S$, and no additional synchronization is required when $P$ runs over the weak model $W$ such that it maintains all its properties satisfied with $S$.

In this paper, we address the problem of verifying robustness of transactional programs
for serializability, against {\em snapshot isolation}. Under snapshot isolation, any transaction $t$ reads values from a snapshot of the database taken at its start and $t$ can commit only if no other committed transaction has written to a location that $t$ wrote to, since $t$ started.
%
%
Robustness is a form of program equivalence between two versions of the same program, obtained using two semantics, one more permissive than the other. It ensures that this permissiveness has no effect on the program under consideration. The difficulty in checking robustness is to apprehend the extra behaviors due to the relaxed model w.r.t. the strong model. This requires a priori reasoning about complex order constraints between operations in arbitrarily long computations, which may need maintaining unbounded ordered structures, and make robustness checking hard or even undecidable.

Our first contribution is to show that verifying robustness of transactional programs against snapshot isolation can be reduced in polynomial time to the reachability problem in concurrent programs under sequential consistency (SC). This allows (1) to avoid explicit handling of the snapshots from where transactions read along computations (since this may imply memorizing unbounded information), and (2) to leverage available tools for verifying invariants/reachability problems on concurrent programs. This also implies that the robustness problem is decidable for finite-state programs, PSPACE-complete when the number of sites is fixed, and EXPSPACE-complete otherwise. This is the first result on the decidability and complexity of the problem of verifying robustness in the context of transactional programs. The problem of verifying robustness has been considered in the literature for several models, including eventual and causal consistency~\cite{DBLP:conf/concur/0002G16,DBLP:conf/popl/BrutschyD0V17,DBLP:conf/pldi/BrutschyD0V18,DBLP:journals/jacm/CeroneG18,sureshconcur2018}. These works provide (over- or under-)approximate analyses for checking robustness, but none of them provides precise (sound and complete) algorithmic verification methods for solving this problem. 

Based on this reduction, our second contribution is a proof methodology for establishing robustness which builds on Lipton's reduction theory~\cite{DBLP:journals/cacm/Lipton75}. We use the theory of movers to establish whether the relaxations allowed by SI are harmless, i.e., they don't introduce new behaviors compared to serializability.

We applied the proposed verification techniques on 10 challenging applications extracted from previous work~\cite{DBLP:conf/pldi/BrutschyD0V18,TPCC,DBLP:conf/icde/AlomariCFR08,DBLP:conf/cloud/HoltBZPOC16,DBLP:conf/concur/0002G16,DBLP:conf/popl/GotsmanYFNS16,DBLP:conf/concur/NagarJ18}. These techniques were enough for proving or disproving the robustness of these applications.


\setlength{\abovedisplayskip}{3pt}
\vspace{-5mm}
\begin{figure}[t]
\lstset{basicstyle=\ttfamily\scriptsize}

\begin{subfigure}{45mm}
\begin{minipage}[c]{19mm}
\begin{lstlisting}[language=Java10]
    p1:
t1: [r1 = y //0
     x = 1]
\end{lstlisting}
\end{minipage}
\begin{minipage}[c]{3mm}
\footnotesize{$||$}
\end{minipage}
\begin{minipage}[c]{17mm}
\begin{lstlisting}[language=Java10]
    p2:
t2: [r2 = x //0
     y = 1]
\end{lstlisting}
\end{minipage}
\caption{Write Skew ($\mathsf{WS}$).}
\label{fig:rob0}
\end{subfigure}
\hspace{10mm}
\begin{subfigure}{73mm}
\scalebox{0.57}
{
\begin{tikzpicture}

 \node[shape=rectangle ,draw=none,font=\large] (A) at (0,0)  []
 {[r1 = y};
 \node[shape=rectangle ,draw=none,font=\large] (C) at (1.3,0)  []
 {; x = 1]};
  \node[shape=rectangle ,draw=none,font=\large] (B) at (5,0)  [] {[r2 = x};
  \node[shape=rectangle ,draw=none,font=\large] (D) at (6.3,0)  [] {; y = 1]};

  \begin{scope}[ every edge/.style={draw=red,very thick}]
  \path [->] (A) edge [bend left] node [above,font=\large] {conflict} (D);
  \path [->] (B) edge [bend left] node [above,font=\large] {conflict} (C);
  \end{scope}

\end{tikzpicture}}
\caption{A WS execution trace.}
\label{fig:rob0trace}
\end{subfigure}



\caption{Examples of non-robust programs illustrating the difference between \sic{} and serializability.
\emph{causal dependency} means that a read in a transaction obtains its value from a write in another transaction.
\emph{conflict} means that a write in a transaction is not visible to a read in another transaction, but it would affect the read value if it were visible.
Here, \emph{happens-before} is the union of the two.}
\vspace{-3mm}
\end{figure}

\section{Overview}\label{sec:overview}

In this section, we give an overview of our approach for checking robustness against snapshot isolation.
While serializability enforces that transactions are atomic and conflicting transactions, i.e., which read or write to a common location, \emph{cannot} commit concurrently,  \sic{}~\cite{DBLP:conf/sigmod/BerensonBGMOO95} allows that conflicting transactions commit in parallel as long as they don't contain a write-write conflict, i.e., write on a common location. Moreover, under \sic{}, each transaction reads from a snapshot of the database taken at its start.
These relaxations permit the ``anomaly'' known as Write Skew (WS) shown in Figure~\ref{fig:rob0}, where an anomaly is a program execution which is allowed by \sic{}, but not by serializability.
The execution of Write Skew under \sic{} allows the reads of ${\tt x}$ and ${\tt y}$ to return $0$ although this cannot happen under serializability. These values are possible since each transaction is executed locally (starting from the initial snapshot) without observing the writes of the other transaction.

\noindent
{\bf Execution trace.} Our notion of program robustness is based on an abstract representation of executions called \emph{trace}.
Informally, an execution trace is a set of events, i.e., accesses to shared variables and transaction begin/commit events, along with several standard dependency relations between events recording the data-flow. The transitive closure of the union of all these dependency relations is called \emph{happens-before}.
An execution is an anomaly if the happens-before of its trace is cyclic.
Figure~\ref{fig:rob0trace} shows the happens-before of the Write Skew anomaly. Notice that the happens-before order is cyclic in both cases.

Semantically, every transaction execution involves two main events, the issue and the commit. The issue event corresponds to a sequence of reads and/or writes where the writes are visible only to the current transaction. We interpret it as a single event since a transaction starts with a database snapshot that it updates in isolation, without observing other concurrently executing transactions.
The commit event is where the writes are propagated and made visible to all processes. Under serializability, the two events coincide, i.e., they are adjacent in the execution. Under \sic{}, this is not the case and in between the issue and the commit of the same transaction, we may have issue/commit events from concurrent transactions. When a transaction commit does not occur immediately after its issue, we say that the underlying transaction is \emph{delayed}.  For example, the following execution of WS corresponds to the happens-before cycle in Figure~\ref{fig:rob0trace} where the write to $x$ was committed after $\atr_2$ finished, hence, $\atr_1$ was delayed:

{\scriptsize
\begin{align*}
\beginact(\apr_1,\atr_1)\loadact(\apr_1,\atr_1,y,0)\issueact(\apr_1,\atr_1,x,1)&&\hspace{-4mm}\commitact(\apr_1,\atr_1) \\
&\beginact(\apr_2,\atr_2)\loadact(\apr_2,\atr_2,x,0)\issueact(\apr_2,\atr_2,y,1)\commitact(\apr_2,\atr_2)
\end{align*}
}%
Above, $\beginact(\apr_1,\atr_1)$ stands for starting a new transaction $\atr_1$ by process $\apr_1$, $\loadact$ represents read (load) actions, while $\issueact$ denotes write actions that are visible only to the current transaction (not yet committed). The writes performed during $\atr_1$ become visible to all processes once the commit event $\commitact(\apr_1,\atr_1)$ takes place.

\noindent
{\bf Reducing robustness to SC reachability.} The above \sic{} execution can be mimicked by an execution of the same program under serializability modulo an instrumentation that simulates the delayed transaction. The local writes in the issue event are simulated by writes to auxiliary registers and the commit event is replaced by copying the values from the auxiliary registers to the shared variables (actually, it is not necessary to simulate the commit event; we include it here for presentation reasons). The auxiliary registers are visible only to the delayed transaction. In order that the execution be an anomaly (i.e., not possible under serializability without the instrumentation) it is required that the issue and the commit events of the delayed transaction are linked by a chain of happens-before dependencies.
For instance, the above execution for WS can be simulated by:

{\scriptsize
\begin{align*}
\beginact(\apr_1,\atr_1)\loadact(\apr_1,\atr_1,y,0)\writeact(\apr_1,\atr_1,r_{x},1)&&\hspace{-6mm}\writeact(\apr_1,\atr_1,x,r_{x}) \\
&\hspace{-3mm}\beginact(\apr_2,\atr_2)\loadact(\apr_2,\atr_2,x,0)\issueact(\apr_2,\atr_2,y,1)\commitact(\apr_2,\atr_2)
\end{align*}
}%
The write to $x$ was delayed by storing the value in the auxiliary register $r_{x}$  and the happens-before chain exists because the read on $y$ that was done by $\atr_1$ is conflicting with the write on $y$ from $\atr_2$ and the read on $x$ by $\atr_2$ is conflicting with the write of $x$ in the simulation of $t_1$'s commit event. On the other hand, the following execution of Write-Skew without the read on $y$ in $\atr_1$:

{\scriptsize
\begin{align*}
\beginact(\apr_1,\atr_1)\writeact(\apr_1,\atr_1,r_{x},1)&&\hspace{-4mm}\writeact(\apr_1,\atr_1,x,r_{x}) \\
&\beginact(\apr_2,\atr_2)\loadact(\apr_2,\atr_2,x,0)\issueact(\apr_2,\atr_2,y,1)\commitact(\apr_2,\atr_2)
\end{align*}
}%
\noindent
misses the conflict (happens-before dependency) between the issue event of $\atr_1$ and $\atr_2$. Therefore, the events of $\atr_2$ can be reordered to the left of $\atr_1$ and obtain an equivalent execution where $\writeact(\apr_1,\atr_1,x,r_{x})$ occurs immediately after $\writeact(\apr_1,\atr_1,r_{x},1)$. In this case, $\atr_1$ is not anymore delayed and this execution is possible under serializability (without the instrumentation).

If the number of transactions to be delayed in order to expose an anomaly is unbounded, the instrumentation described above may need an unbounded number of auxiliary registers. This would make the  verification problem hard or even undecidable. However,
we show that it is actually enough to delay a single transaction, i.e., a program admits an anomaly under \sic{} iff it admits an anomaly containing a single delayed transaction. This result implies that the number of auxiliary registers needed by the instrumentation is bounded by the number of program variables, and that checking robustness against \sic{} can be reduced in linear time to a reachability problem under serializability (the reachability problem encodes the existence of the chain of happens-before dependencies mentioned above). The proof of this reduction relies on a non-trivial characterization of anomalies.

\noindent
{\bf Proving robustness using commutativity dependency graphs.}  Based on the reduction above, we also devise an approximated method for checking robustness based on the concept of mover in Lipton's reduction theory~\cite{DBLP:journals/cacm/Lipton75}.
 An event is a left (resp., right) mover if it commutes to the left (resp., right) of another event (from a different process) while preserving the computation. We use the notion of mover to characterize happens-before dependencies between transactions. Roughly, there exists a happens-before dependency between two transactions in some execution if one doesn't commute to the left/right of the other one.
 We define a commutativity dependency graph which summarizes the happens-before dependencies in all executions of a given program between transactions $\atr$ as they appear in the program, transactions $\atr\setminus\{w\}$ where the writes of $\atr$ are
\begin{wrapfigure}{r}{5.5cm}
\vspace{-20pt}
\lstset{basicstyle=\ttfamily\scriptsize}
\begin{subfigure}{55mm}
\scalebox{0.57}
{\begin{tikzpicture}
 \node[shape=rectangle ,draw=none,font=\large] (A) at (0,0)  [] {$\atr_1$};
 \node[shape=rectangle ,draw=none,font=\large] (B) at (0,1.5)  [] {$\atr_2$};
  \node[shape=rectangle ,draw=none,font=\large] (C) at (4,0)  [] {$\atr_1\setminus\{r\}$};
    \node[shape=rectangle ,draw=none,font=\large] (D) at (4,1.5)  [] {$\atr_2\setminus\{r\}$};
      \node[shape=rectangle ,draw=none,font=\large] (E) at (8,0)  [] {$\atr_1\setminus\{w\}$};
       \node[shape=rectangle ,draw=none,font=\large] (F) at (8,1.5)  [] {$\atr_2\setminus\{w\}$};
  \begin{scope}[ every edge/.style={draw=black,very thick}]
  \path [->] (C) edge  [bend left] node [above,font=\large] {} (B);
  \path [->] (C) edge  [bend left] node [above,font=\large] {} (F);
  \path [->] (A) edge  [bend left] node [above,font=\large] {} (B);
  \path [->] (A) edge  [bend left] node [above,font=\large] {} (F);
  \path [->] (B) edge  [bend left] node [above,font=\large] {} (C);
  \path [->] (B) edge  [bend left] node [above,font=\large] {} (A);
  \path [->] (F) edge  [bend left] node [above,font=\large] {} (C);
  \path [->] (F) edge  [bend left] node [above,font=\large] {} (A);
  \end{scope}
\end{tikzpicture}}
\end{subfigure}
\vspace{-15pt}
\caption{Commutativity dependency graph of WS where the read of $y$ is omitted.}
\vspace{-20pt}
\label{fig:cdgWS}
\end{wrapfigure}
 deactivated (i.e., their effects are not visible outside the transaction), and transactions $\atr\setminus\{r\}$ where the reads of $\atr$ obtain non-deterministic values. The transactions $\atr\setminus\{w\}$ are used to simulate issue events of delayed transactions (where writes are not yet visible) while the transactions $\atr\setminus\{r\}$ are used to simulate commit events of delayed transactions (which only write to the shared memory).
Two transactions $a$ and $b$ are linked by an edge iff $a$ \emph{cannot} move to the right of $b$ (or $b$ cannot move to the left of $a$), or if they are related by the program order (i.e., issued in some order in the same process).
Then a program is robust if for every transaction $\atr$, this graph \emph{doesn't} contain a path from $\atr\setminus\{w\}$ to $\atr\setminus\{r\}$ formed of transactions that don't write to a variable that $\atr$ writes to (the latter condition is enforced by \sic{} since two concurrent transactions cannot commit at the same time when they write to a common variable).
 For example, Figure \ref{fig:cdgWS} shows the  commutativity dependency graph of the modified WS program where the read of $y$ is removed from $\atr_1$. The fact that it doesn't contain any path like above implies that it is robust.

\vspace{-1mm}
\section{Programs}\label{sec:programs}


A program is parallel composition of \emph{processes} distinguished using a set of identifiers $\mathbb{P}$.
Each process is a sequence of \emph{transactions} and each transaction is a sequence of \emph{labeled instructions}.
Each transaction starts with a \plog{begin} instruction and finishes with a \plog{commit} instruction.
Each other instruction is either an assignment to a process-local \emph{register} from a set $\mathbb{R}$ or to a \emph{shared variable} from a set $\mathbb{V}$, or an \plog{assume} statement.
The read/write assignments use values from a data domain $\mathbb{D}$.
An assignment to a register $\langle reg\rangle := \langle var\rangle$ is called a \emph{read} of the shared-variable $\langle var\rangle$ and an assignment to a shared variable $\langle var\rangle := \langle reg\text{-}expr\rangle$ is called a \emph{write} to $\langle var\rangle$ ($\langle reg\text{-}expr\rangle$ is an expression over registers whose syntax we leave unspecified since it is irrelevant for our development).
The \plog{assume} $\langle bexpr\rangle$ blocks the process if the Boolean expression $\langle bexpr\rangle$ over registers is false. They are used to model conditionals as usual.
We use \plog{goto} statements to model an arbitrary control-flow where the same label can be assigned to multiple instructions and multiple \plog{goto} statements can direct the control to the same label which allows to mimic imperative constructs like loops and conditionals. To simplify the technical exposition, our syntax includes simple read/write instructions. However, our results apply as well to instructions that include SQL (select/update) queries. The experiments reported in Section~\ref{sec:Exper:paper} consider programs with SQL based transactions.

%
%
%
%
%
%


The semantics of a program under \sic{} is defined as follows.
The shared variables are stored in a central memory and each process keeps a replicated copy of the central memory.
A process starts a transaction by discarding its local copy and fetching the values of the shared variables from the central memory. When a process commits a transaction, it merges its local copy of the shared  variables with the one stored in the central memory in order to make its updates visible to all processes. During the execution of a transaction, the process stores the writes to shared variables only in its local copy and reads only from its local copy.
When a process merges its local copy with the centralized one, it is required that there were no concurrent updates that occurred after the last fetch from the central memory to a shared variable that was updated by the current transaction. Otherwise, the transaction is aborted and its effects discarded.

More precisely, the semantics of a program $\aprog$ under \sic{} is defined as a labeled transition system $[\aprog]_{\sic{}}$ where transactions are labeled by the set of events
\begin{align*}
\eventsconf = \{\beginact(\apr,\atr), \loadact(\apr,\atr,\anaddr,\aval), \issueact(\apr,\atr,\anaddr,\aval), \commitact(\apr,\atr): \apr\in \mathbb{P}, \atr\in \mathbb{T}^2, \anaddr\in \mathbb{V}, \aval\in \mathbb{D}\}
\end{align*}
where $\beginact$ and $\commitact$ label transitions corresponding to the start and the commit of a transaction, respectively.
$\issueact$ and $\loadact$ label transitions corresponding to writing, resp., reading, a shared variable during some transaction. 


An execution of program $\aprog$, under snapshot isolation, is a sequence of events $\event_1\cdot\event_2\cdot\ldots$ corresponding to a run of $[\aprog]_{\scct{}}$.
 The set of executions of $\aprog$ under \sic{} is denoted by $\executionsconf_{\sic{}}(\aprog)$.




\section{Robustness Against \sic{}}
A \emph{trace} abstracts the order in which shared-variables are accessed inside a transaction and the order between transactions accessing different variables. Formally, the trace of an execution $\rho$ is obtained by (1) replacing each sub-sequence of transitions in $\rho$ corresponding to the same transaction, but excluding the $\commitact$ transition, with a single ``macro-event'' $\issueact(\apr,\atr)$, and (2) adding several standard relations between these macro-events $\issueact(\apr,\atr)$ and commit events $\commitact(\apr,\atr)$ to record the data-flow in $\rho$, e.g. which transaction wrote the value read by another transaction. The sequence of $\issueact(\apr,\atr)$ and $\commitact(\apr,\atr)$ events obtained in the first step is called a \emph{summary of $\rho$}. We say that a transaction $\atr$ in $\rho$ performs an \emph{external read} of a variable $\anaddr$ if $\rho$ contains an event $\loadact(\apr,\atr,\anaddr,\aval)$ which is not preceded by a write on $\anaddr$ of $\atr$, i.e., an event $\issueact(\apr,\atr,\anaddr,\aval)$. Also, we say that a transaction $\atr$ \emph{writes} a variable $\anaddr$ if $\rho$ contains an event $\issueact(\apr,\atr,\anaddr,\aval)$, for some $\aval$.

The \emph{trace} $\traceof{\rho} = (\tau, \po, \rfo, \sto, \cfo, \sametro)$ of an execution $\rho$ consists of the summary $\tau$ of $\rho$ along with the \emph{program order} $\po$, which relates any two issue events $\issueact(\apr,\atr)$ and $\issueact(\apr,\atr')$ that occur in this order in $\tau$, \emph{write-read} relation $\rfo$ (also called \emph{read-from}), which relates any two events $\commitact(\apr,\atr)$ and $\issueact(\apr',\atr')$ that occur in this order in $\tau$ such that $\atr'$ performs an external read of $\anaddr$, and $\commitact(\apr,\atr)$ is the last event in $\tau$ before $\issueact(\apr',\atr')$ that writes to $\anaddr$ (to mark the variable $\anaddr$, we may use $\rfo(x)$), the \emph{write-write} order $\sto$ (also called store-order), which relates any two store events $\commitact(\apr,\atr)$ and $\commitact(\apr',\atr')$ that occur in this order in $\tau$ and write to the same variable $\anaddr$ (to mark the variable $\anaddr$, we may use $\sto(x)$), the \emph{read-write} relation $\cfo$ (also called \emph{conflict}), which relates any two events $\issueact(\apr,\atr)$ and $\commitact(\apr',\atr')$ that occur in this order in $\tau$ such that $\atr$ reads a value that is overwritten by $\atr'$, and the \emph{same-transaction} relation $\sametro$, which relates the issue event with the commit event of the same transaction.
The read-write relation $\cfo$ is formally defined as $\cfo(\anaddr)=\rfo^{-1}(\anaddr);\sto(\anaddr)$ (we use $;$ to denote the standard composition of relations) and $\cfo=\bigcup_{\anaddr\in\mathbb{V}} \cfo(x)$. If a transaction $\atr$ reads the initial value of $\anaddr$ then $\cfo(\anaddr)$ relates $\issueact(\apr,\atr)$ to $\commitact(\apr',\atr')$ of any other transaction $\atr'$ which writes to $\anaddr$ (i.e., $(\issueact(\apr,\atr),\commitact(\apr',\atr'))\in\cfo(\anaddr)$) (note that in the above relations, $\apr$ and $\apr'$ might designate the same process).

Since we reason about only one trace at a time, to simplify the writing, we may say that a trace is simply a sequence $\tau$ as above, keeping the relations  $\po$, $\rfo$, $\sto$, $\cfo$, and $\sametro$ implicit.
The set of traces of executions of a program $\aprog$ under \sic{} is denoted by $\tracesconf(\aprog)_{\sic{}}$.

\noindent
{\bf Serializability semantics.} The semantics of a program under serializability can be defined using a transition system where the configurations keep a single shared-variable valuation (accessed by all processes) with the standard interpretation of read and write statements. Each transaction executes in isolation. Alternatively, the serializability semantics can be defined as a restriction of $[\aprog]_{\sic{}}$ to the set of executions where each transaction is \emph{immediately} delivered when it starts, i.e., the start and commit time of transaction coincide $\atr.\asti=\atr.\acti$. Such executions are called \emph{serializable} and the set of serializable executions of a program $\aprog$ is denoted by $\executionsconf_{\serc{}}(\aprog)$. The latter definition is easier to reason about when relating executions under snapshot isolation and serializability, respectively.

\noindent
{\bf Serializable trace.} A trace $\atrace$ is called \emph{serializable} if it is the trace of a serializable execution. Let $\tracesconf_{\serc{}}(\aprog)$ denote the set of serializable traces. Given a serializable trace $\atrace=(\tau, \po, \rfo, \sto,\cfo, \sametro)$ we have that every event $\issueact(\apr,\atr)$ in $\tau$ is immediately followed by the corresponding $\commitact(\apr,\atr)$ event.

\noindent
{\bf Happens before order.} Since multiple executions may have the same trace, it is possible that an execution $\rho$ produced by snapshot isolation has a serializable trace $\traceof{\rho}$ even though $\issueact(\apr,\atr)$ events may not be immediately followed by $\commitact(\apr,\atr)$ actions. However, $\rho$ would be equivalent, up to reordering of ``independent'' (or commutative) transitions, to a serializable execution. To check whether the trace of an execution is serializable, we introduce the \emph{happens-before} relation on the events of a given trace
as the transitive closure of the union of all the relations in the trace, i.e., $\hbo = (\po \cup \sto \cup \rfo \cup \cfo \cup \sametro)^{+}$.

Finally, the happens-before relation between events is extended to transactions as follows: a transaction $\atr_1$ \emph{happens-before} another transaction $\atr_2\neq \atr_1$ if the trace $\atrace$ contains an event of transaction $\atr_1$ which happens-before an event of $\atr_2$. The happens-before relation between transactions is denoted by $\hbo_t$ and called \emph{transactional happens-before}. 
The following characterizes serializable traces.
\vspace{-0.7mm}
\begin{theorem}[\cite{Adya99,DBLP:journals/toplas/ShashaS88}]\label{th:acyclicity}
 A trace $\atrace$ is serializable iff $\hbo_t$ is acyclic.
\end{theorem}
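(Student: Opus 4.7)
The plan is to prove the two directions separately, since the statement is an equivalence between a semantic property (existence of a serializable execution producing the trace) and a purely combinatorial property of the trace (acyclicity of $\hbo_t$).

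For the forward direction (serializable $\Rightarrow$ $\hbo_t$ acyclic), I would start from a serializable execution $\rho$ with $\traceof{\rho}=\atrace$ and use the fact that, by definition of the serializable semantics, every $\issueact(\apr,\atr)$ is immediately followed by $\commitact(\apr,\atr)$. This yields a total order $<_\rho$ on transactions given by the order of their commits in $\rho$. I would then check, relation by relation, that each of $\po$, $\rfo$, $\sto$, $\cfo$ is included in $<_\rho$ when lifted to transactions: for $\po$ this is immediate; for $\rfo$, a reader must see a value that has already been committed; for $\sto$, the store order is literally the commit order on writers of the same variable; for $\cfo=\rfo^{-1};\sto$, a transaction that reads $x$ from $\atr'$ must be linearized before any later overwriter of $x$, because otherwise the read would not have observed $\atr'$. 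Since $\sametro$ lifts to the identity on transactions, $\hbo_t \subseteq\; <_\rho$, and therefore $\hbo_t$ is acyclic.

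For the backward direction ($\hbo_t$ acyclic $\Rightarrow$ trace serializable), I would pick any linear extension $\atr_1,\atr_2,\ldots$ of $\hbo_t$ and build an execution $\rho'$ that runs the transactions atomically, in this order, reading the central memory at the start of each transaction. I would then argue $\traceof{\rho'}=\atrace$ by checking that $\rho'$ induces the same $\rfo$ and $\sto$ as $\atrace$ (the program order $\po$ is preserved because $\hbo_t$ extends $\po$, and $\sametro$ is tautological). The store order is easy, since any two writers of the same variable are related by $\sto\subseteq\hbo_t$, so their order in $\rho'$ agrees with $\sto$. The main obstacle is matching $\rfo$: I need that for every external read of $x$ by $\atr$ in $\atrace$, with $(\atr',\atr)\in\rfo(x)$, no other transaction $\atr''$ that writes $x$ is scheduled in $\rho'$ strictly between $\atr'$ and $\atr$. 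If such an $\atr''$ existed, then either $\atr'<_{\sto(x)}\atr''$, in which case $(\atr,\atr'')\in\cfo(x)\subseteq\hbo_t$ would force $\atr''$ after $\atr$ in the linearization (contradiction), or $\atr''<_{\sto(x)}\atr'$, in which case $\atr''$ would appear before $\atr'$ in the linearization by $\sto\subseteq\hbo_t$ (contradiction). Hence $\atr$'s read in $\rho'$ still observes $\atr'$, and $\rho'$ is a serializable execution with trace $\atrace$.

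The hard part is clearly the backward direction, and inside it the $\rfo$-preservation argument, since one has to rule out interlopers using only the relations already present in the trace. The usefulness of defining $\cfo$ as $\rfo^{-1};\sto$ is exactly that it encodes this ``no interloper'' obligation as an edge in $\hbo_t$, so the acyclicity hypothesis does the whole job. Everything else is bookkeeping on linear extensions and on the definition of the serializable semantics.
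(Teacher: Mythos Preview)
The paper does not actually prove this theorem: it is stated as a classical result with citations to Adya's thesis and Shasha--Snir, and no argument is given in the paper or its appendix. So there is no ``paper's own proof'' to compare your proposal against.

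That said, your proposal is the standard argument for this conflict-serializability characterization and is essentially correct. Two small points you should make explicit if you write it out in full. First, in the backward direction you implicitly use that any two transactions writing the same variable are $\sto$-related; this holds here because $\sto$ is defined from the total order of commits in the underlying execution, but it is worth stating since otherwise your case split on $\atr''$ is not exhaustive. Second, you should handle the case where $\atr$ reads the initial value of $x$ (so there is no $\atr'$ with $(\atr',\atr)\in\rfo(x)$); the paper's definition of $\cfo(x)$ explicitly puts an edge from $\atr$ to \emph{every} writer of $x$ in that case, which immediately forces all writers of $x$ after $\atr$ in any linear extension and rules out interlopers. Finally, for $\rho'$ to be a genuine execution of the program (and not just a sequence of events), you need that each transaction sees the same register values and takes the same branches as in the original execution; this follows because $\po$ is respected and every external read returns the same value (by your $\rfo$-preservation argument), but it is the step that ties the combinatorics back to the operational semantics.
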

\vspace{-0.5mm}


A program is called robust if it produces the same set of traces as the serializability semantics.
\vspace{-0.7mm}
\begin{definition}
A program $\aprog$ is called \emph{robust} against \sic{} iff $\tracesconf_{\textsf{\sic{}}}(\aprog)=\tracesconf_{\serc{}}(\aprog)$.
\end{definition}
\vspace{-0.5mm}
Since $\tracesconf_{\serc{}}(\aprog) \subseteq \tracesconf_{\textsf{X}}(\aprog)$, the problem of checking robustness of a program $\aprog$ is reduced to checking whether there exists a trace $\atrace \in \tracesconf_{\sic{}}(\aprog) \setminus \tracesconf_{\serc{}}(\aprog).$
\section{Reducing Robustness against \sic{} to SC Reachability}\label{sec:Instr}
A trace which is not serializable must contain at least an issue and a commit event of the same transaction that don't occur
one after the other even after reordering of ``independent'' events. Thus, there must exist an event that occur between the two which is
 related to both events via the happens-before relation, forbidding the issue and commit to be adjacent. Otherwise, we can build another trace with the same happens-before where events are reordered such that the issue is immediately followed by the corresponding commit.
 The latter is a serializable trace which contradicts the initial assumption.
 We define a program instrumentation which mimics the delay of transactions by doing the writes on auxiliary variables which are not visible to other transactions. After the delay of a transaction, we track happens-before dependencies until we execute a transaction that does a ``read'' on one of the variables that the delayed transaction writes to (this would expose a read-write dependency to the commit event of the delayed transaction). While tracking happens-before dependencies we cannot execute a transaction that writes to a variable that the delayed transaction writes to since \sic{} forbids write-write conflicts between concurrent transactions.

Concretely, given a program $\aprog$, we define an instrumentation of $\aprog$ such that $\aprog$ is not robust against \sic{} iff the instrumentation reaches an error state under serializability.
The instrumentation uses auxiliary variables in order to simulate a \emph{single} delayed transaction which we prove that it is enough for deciding robustness.
Let $\issueact(\apr,\atr)$ be the issue event of the only delayed transaction.
The process $\apr$ that delayed $\atr$ is called the \emph{Attacker}. When the attacker finishes executing the delayed transaction it stops.
Other processes that execute transactions afterwards are called \emph{Happens-Before Helpers}.

The instrumentation uses two copies of the set of shared variables in the original program to simulate the delayed transaction. We use primed variables $\anaddr'$ to denote the second copy. Thus, when a process becomes the attacker, it will only write to the second copy that is not visible to other processes including the happens-before helpers. The writes made by the other processes including the happens-before helpers are made visible to all processes.

When the attacker delays the transaction $\atr$, it keeps track of the variables it accessed, in particular, it stores the name of one of the variables it writes to, $x$, it tracks every variable $y$ that it reads from and every variable $z$ that it writes to.
When the attacker finishes executing $\atr$, and some other process wants to execute some other transaction, the underlying transaction must contain a write to a variable $y$ that the attacker reads from. Also, the underlying transaction must not write to a variable that $\atr$ writes to.
We say that this process has joined happens-before helpers through the underlying transaction.
While executing this transaction, we keep track of each variable that was accessed and the type of operation, wheather it is a read or write.
Afterward, in order for some other transaction to ``join'' the happens-before path, it must not write to a variable that $\atr$ writes to so it does not violate the fact that \sic{} forbids write-write conflicts, and it has to satisfy one of the following conditions in order to ensure the continuity of the  happens-before dependencies:
(1) the transaction is issued by a process that has already another transaction in the  happens-before dependency (program order dependency),
(2) the transaction is reading from a shared variable that was updated by a previous transaction in the happens-before dependency (write-read dependency),
(3) the transaction writes to a shared variable that was read by a previous transaction in the happens-before dependency (read-write dependency), or
(4) the transaction writes to a shared variable that was updated by a previous transaction in the happens-before dependency (write-write dependency).
We introduce a flag for each shared variable to mark the fact that the variable was read or written by a previous transaction. 

Processes continue executing transactions as part of the chain of happens-before dependencies, until a transaction does a read on the variable $x$ that $\atr$ wrote to. In this case, we reached an error state which signals that we found a cycle in the transactional happens-before relation. 

The instrumentation uses four varieties of flags: a) global flags (i.e., $\hb{}$, $a_{\attacktransaction}$, $a_{\attackstore}$), b) flags local to a process (i.e., $p.a$ and $p.hbh$), and c) flags per shared variable (i.e., $\anaddr.event$, $\anaddr.event'$, and $\anaddr.eventI$). We will explain the meaning of these flags along with the instrumentation. At the start of the execution, all flags are initialized to null ($\perp$).

Whether a process is an attacker or happens-before helper is not enforced syntactically by the instrumentation. It is set non-deterministically during the execution using some additional process-local flags. Each process chooses to set to $\mytrue$ at most one of the flags $p.a$ and $p.hbh$, implying that the process becomes an attacker or happens-before helper, respectively. At most one process can be an attacker, i.e., set $p.a$ to $\mytrue$.
In the following, we detail the instrumentation for read and write instructions of the attacker and happens-before helpers.
\subsection{Instrumentation of the Attacker}\label{subsec:Attacker}
Figure \ref{Figure:Attacker} lists the instrumentation of the write and read instructions of the attacker. Each process passes through an initial phase where it executes transactions that are visible immediately to all the other processes (i.e., they are not delayed), and then non-deterministically it can choose to delay a transaction at which point it sets the flag
$a_{\attacktransaction}$ to $\mytrue$. During the delayed transaction it chooses non-deterministically a write instruction to a variable $x$ and stores the name of this variable in the flag  $a_{\attackstore}$ (line \eqref{Equation:AttackerStore}). The values written during the delayed transaction are stored in the primed variables and are visible only to the current transaction, in case the transaction reads its own writes. For example, given a variable $z$, all writes to $z$ from the original program are transformed into writes to the primed version $z'$ (line \eqref{Equation:AttackerDelayStore}). Each time, the attacker writes to $z$, it sets the flag $z.event'=1$. This flag is used later by transactions from happens-before helpers to avoid writing to variables that the delayed transaction writes to.

A read on a variable, $y$, in the delayed transaction takes her value from the primed version, $y'$.
In every read in the delayed transaction, we set the flag $y.event$ to $\loadacc$ (line \eqref{Equation:AttackerLastLoad1}) to be used latter in order for a process to join the happens-before helpers.
Afterward, the attacker starts the happens-before path,  and it sets the variable $\hb{}$ to $\mytrue$ (line \eqref{Equation:AttackerLastLoad}) to mark the start of the happens.
When the flag $\hb{}$ is set to $\mytrue$ the attacker stops executing new transactions. 

\begin{figure}[t]
 \scriptsize
 \begin{minipage}{0.45\linewidth}
\begin{eqnarray}
&&\semidler{\thetransition{\alab_1}{\alab_2}{\theload{\areg}{\anaddr}}}  =\notag\\
&&\textbf{// Read before the delayed transaction}\notag\\
&&\thetransitionInstrumented{\alab_1}{\alab_{x1}}{\thecondition{ a_{\attacktransaction} = \perp}}\notag\\
&&\thetransitionInstrumented{\alab_{x1}}{\alab_{2}}{\theload{\areg}{\anaddr}}\notag\\
&&\textbf{// Read in the delayed transaction}\notag\\
&&\thetransitionInstrumented{\alab_1}{\alab_{x2}}{\thecondition{ a_{\attacktransaction} \neq \perp \land \apr.a\neq \perp}}\notag\\
&&\thetransitionInstrumented{\alab_{x2}}{\alab_{x3}}{\theload{\areg}{\anaddr'}}\notag\\
&&\thetransitionInstrumented{\alab_{x3}}{\alab_{x4}}{\theassign{\anaddr.event}{\loadacc}}\label{Equation:AttackerLastLoad1}\\
&&\thetransitionInstrumented{\alab_{x4}}{\alab_{x5}}{\thecondition{\hb{} = \perp}}\notag\\
&&\thetransitionInstrumented{\alab_{x5}}{\alab_{2}}{\thestore{\hb{}}{\mytrue}}\label{Equation:AttackerLastLoad}\\
&&\thetransitionInstrumented{\alab_{x4}}{\alab_{2}}{\thecondition{\hb{} \neq \perp}}\notag
\end{eqnarray}
 \end{minipage}\hfill
 \begin{minipage}{0.53\linewidth}
 \begin{eqnarray}
&&\semidler{\thetransition{\alab_1}{\alab_2}{\thestore{\anaddr}{e}}} = \notag\\
&&\textbf{// Write before the delayed transaction}\notag\\
&&\thetransitionInstrumented{\alab_1}{\alab_{x1}}{\thecondition{ a_{\attacktransaction} = \perp}}\notag\\
&&\thetransitionInstrumented{\alab_{x1}}{\alab_{2}}{\thestore{\anaddr}{e}}\notag\\
&&\textbf{// Write in the delayed transaction}\notag\\
&&\thetransitionInstrumented{\alab_1}{\alab_{x2}}{\thecondition{ a_{\attacktransaction} \neq \perp \land \apr.a\neq \perp}}\notag\\
&&\thetransitionInstrumented{\alab_{x2}}{\alab_{x3}}{\thestore{\anaddr'}{e}}\label{Equation:AttackerDelayStore}\\
&&\thetransitionInstrumented{\alab_{x3}}{\alab_{2}}{\theassign{\anaddr.event'}{1}}\label{Equation:AttackerDelayStore0}\\
&&\textbf{// Special write in the delayed transaction}\notag\\
&&\thetransitionInstrumented{\alab_1}{\alab_{x4}}{\thecondition{a_{\attackstore} = \anaddr.event = \perp \land \  a_{\attacktransaction} \neq \perp}}\notag\\
&&\thetransitionInstrumented{\alab_{x4}}{\alab_{x5}}{\thestore{\anaddr'}{e}}\notag\\
&&\thetransitionInstrumented{\alab_{x5}}{\alab_{x6}}{\thestore{a_{\attackstore}}{`\anaddr`}}\label{Equation:AttackerStore}\\
&&\thetransitionInstrumented{\alab_{x8}}{\alab_{2}}{\theassign{\anaddr.event'}{1}}\notag
\end{eqnarray}
 \end{minipage}
\normalsize
\vspace{-2mm}
\caption{Instrumentation of the Attacker. We use $`\anaddr`$ to denote the name of the shared variable $\anaddr$.}
\label{Figure:Attacker}
\vspace{-4mm}
\end{figure}
\subsection{Instrumentation of the Happens-Before Helpers}\label{subsec:HBHelpers}
The remaining processes, which are not the attacker, can become a happens-before helper.
Figure \ref{Figure:HBHelpersInstrumentation} lists the instrumentation of write and read instructions of a happens-before helper.
In a first phase, each process executes the original code until the flag $a_{\attacktransaction}$ is set to $\mytrue$ by the attacker. This flag signals the ``creation'' of the secondary copy of the shared-variables, which can be observed only by the attacker. At this point, the flag $\hb{}$ is set to $\mytrue$, and the happens-before helper process
chooses non-deterministically a first transaction through which it wants to join the set of happens-before helpers, i.e., continue the happens-before dependency created by the existing happens-before helpers. When a process chooses a transaction, it makes a pledge (while executing the \plog{begin} instruction) that during this transaction it will either
read from a variable that was written to by another happens-before helper, write to a variable that was accessed (read or written) by another happens-before helper, or write to a variable that was read from in the delayed transaction.
When the pledge is met, the process sets the flag $p.hbh$ to $\hbhjoin$ (lines \eqref{Equation:HBHMeetConditionLoad} and \eqref{Equation:HBHMeetConditionStore}). The execution is blocked if a process does not keep its pledge (i.e., the flag $p.hbh$ is null) at the end of the transaction.
Note that the first process to join the happens-before helper has to execute a transaction $\atr$ which writes to a variable that was read from in the delayed transaction since this is the only way to build a happens-before between $\atr$, and the delayed transaction ($\po$ is not possible since $\atr$ is not from the attacker, $\rfo$ is not possible since $\atr$ does not see the writes of the delayed transaction, and $\sto$ is not possible since $\atr$ cannot write to a variable that the delayed transaction writes to).
We use a flag $x.event$ for each variable $x$ to record the type (read $\loadacc$ or write $\storeacc$) of the last access made by a happens-before helper (lines \eqref{Equation:HBHFlagLoad} and \eqref{Equation:HBHFlagStore}). During the execution of a transaction that is part of the happens-before dependency, we must ensure that the transaction does not write to variable $y$ where $y.even'$ is set to $1$. Otherwise, the execution is blocked (line \ref{Equation:HBHBlockStore}).

The happens-before helpers continue executing their instructions, until one of them reads from the shared variable $x$ whose name was stored in $a_{\attackstore}$. This establishes a happens-before dependency between the delayed transaction and a ``fictitious'' store event corresponding to the delayed transaction that could be executed just after this read of $x$. The execution doesn't have to contain this store event explicitly since it is always enabled. Therefore, at the end of every transaction, the instrumentation checks whether the transaction read $x$. If it is the case, then the execution stops and goes to an error state to indicate that this is a robustness violation.
Notice that after the attacker stops, the only processes that are executing transactions are happens-before helpers, which is justified since when a process is not from a happens-before helper it implies that we cannot construct a happens-before dependency between a transaction of this process and the delayed transaction which means that the two transactions commute which in turn implies that this process's transactions can be executed before executing the delayed transaction of the attacker.

\begin{figure}[t]
 \scriptsize
 \hspace{-3mm}
\begin{minipage}{0.49\linewidth}
\begin{eqnarray}
&&\semHbhelperorig{\thetransition{\alab_1}{\alab_2}{\theload{\areg}{\anaddr}}}=\notag\\[1mm]
&&\textbf{// Read before the delayed transaction}\notag\\
&&\thetransitionInstrumented{\alab_{1}}{\alab_{x1}}{\thecondition{\hb{} = \perp \land p.a = \perp}}\notag\\
&&\thetransitionInstrumented{\alab_{x1}}{\alab_{2}}{\theload{\areg}{\anaddr}}\label{Equation:HBHLoad0}\\
&&\textbf{// Read after the delayed transaction}\notag\\
&&\thetransitionInstrumented{\alab_1}{\alab_{x2}}{\thecondition{ \hb{} \neq \perp}}\notag\\
&&\thetransitionInstrumented{\alab_{x2}}{\alab_{x3}}{\theload{\areg}{\anaddr}}\notag\\
&&\thetransitionInstrumented{\alab_{x3}}{\alab_{x4}}{\thecondition{\anaddr.eventI = \storeacc \land p.hbh = \perp}}\notag\\
&&\thetransitionInstrumented{\alab_{x4}}{\alab_{2}}{\theassign{p.hbh}{\hbhjoin}}\label{Equation:HBHMeetConditionLoad}\\
&&\thetransitionInstrumented{\alab_{x3}}{\alab_{x5}}{\thecondition{\anaddr.event = \perp}}\notag\\
&&\thetransitionInstrumented{\alab_{x5}}{\alab_{2}}{\theassign{\anaddr.event}{\loadacc}}\label{Equation:HBHFlagLoad}\\
&&\thetransitionInstrumented{\alab_{x3}}{\alab_{2}}{\thecondition{\anaddr.event \neq \perp \lor \  p.hbh \neq \perp}}\notag
\end{eqnarray}
 \end{minipage}
\begin{minipage}{0.49\linewidth}
\begin{eqnarray}
&&\semHbhelperorig{\thetransition{\alab_1}{\alab_2}{\thestore{\anaddr}{\anexpr}}}=\notag\\[1mm]
&&\textbf{// Write before the delayed transaction}\notag\\
&&\thetransitionInstrumented{\alab_{1}}{\alab_{x1}}{\thecondition{\hb{} = \perp \land \  a_{\attacktransaction} = \perp}}\notag\\
&&\thetransitionInstrumented{\alab_{x1}}{\alab_{2}}{\thestore{\anaddr}{\anexpr}}\notag\\
&&\textbf{// Write after the delayed transaction}\notag\\
&&\thetransitionInstrumented{\alab_1}{\alab_{x2}}{\thecondition{ \hb{} \neq \perp \land p.a = \perp}}\notag\\
&&\thetransitionInstrumentedb{\alab_{x2}}{\thecondition{\anaddr.event' \neq \perp}}\label{Equation:HBHBlockStore}\\
&&\thetransitionInstrumented{\alab_{x2}}{\alab_{x3}}{\thecondition{\anaddr.event' = \perp}}\notag\\
&&\thetransitionInstrumented{\alab_{x3}}{\alab_{x4}}{\thestore{\anaddr}{\anexpr}}\notag\\
&&\thetransitionInstrumented{\alab_{x4}}{\alab_{x5}}{\theassign{\anaddr.event}{\storeacc}}\label{Equation:HBHFlagStore}\\
&&\thetransitionInstrumented{\alab_{x5}}{\alab_{x6}}{\thecondition{\anaddr.eventI \neq \perp \land \  p.hbh = \perp}}\notag\\
&&\thetransitionInstrumented{\alab_{x6}}{\alab_{2}}{\theassign{p.hbh}{\hbhjoin}}\label{Equation:HBHMeetConditionStore}\\
&&\thetransitionInstrumented{\alab_{x5}}{\alab_{2}}{\thecondition{\anaddr.eventI = \perp \lor \  p.hbh \neq \perp}}\notag
\end{eqnarray}
 \end{minipage}
\normalsize
\vspace{-2mm}
\caption{Instrumentation of Happens-Before Helpers.}
\label{Figure:HBHelpersInstrumentation}
\vspace{-4mm}
\end{figure}
\subsection{Correctness}
The role of a process in an execution is chosen non-deterministically at runtime. Therefore, the final instrumentation of a given program $\aprog$, denoted by $\sem{\aprog}$, is obtained by replacing each labeled instruction $\langle linst\rangle$ with the concatenation of the instrumentations corresponding to the attacker and the happens-before helpers, i.e., $\ \ \ \sem{\langle linst\rangle} ::= \semidler{\langle linst\rangle} \hspace{0.17cm}   \semHbhelperorig{\langle linst\rangle} $

The following theorem states the correctness of the instrumentation.
\vspace{-0.7mm}
\begin{theorem}\label{th:SIfinal}
\mbox{$\aprog$ is not robust against \sic{} iff $\sem{\aprog}$ reaches the error state.}
\end{theorem}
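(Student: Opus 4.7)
The plan is to prove both directions of this equivalence, bridging the gap between non-serializable \sic{} traces of $\aprog$ and runs of $\sem{\aprog}$ (executed under serializability) that reach the error state. The central intermediate result I would establish first is a \emph{minimal violation lemma}: if $\aprog$ admits a trace in $\tracesconf_{\sic{}}(\aprog)\setminus\tracesconf_{\serc{}}(\aprog)$, then by Theorem~\ref{th:acyclicity} the transactional happens-before $\hbo_t$ contains a cycle, and one can exhibit a violation witnessed by a \emph{single delayed transaction} $\atr$ whose issue event is followed by a chain $\atr=\atr_0,\atr_1,\ldots,\atr_n$ with $(\atr_i,\atr_{i+1})\in\hbo_t$, where $\atr_n$ performs a read of some variable $x$ that $\atr$ writes to, closing the cycle via a $\cfo$-edge back to $\atr$'s commit. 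Furthermore, because \sic{} forbids write-write conflicts between concurrent transactions, none of $\atr_1,\ldots,\atr_n$ write to any variable that $\atr$ writes to. This normal form is exactly what the instrumentation is designed to detect.

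For the ``only if'' direction, assuming $\aprog$ is not robust, I would take such a minimal witness and construct a corresponding run of $\sem{\aprog}$ under serializability. The process that issues $\atr$ is designated the Attacker: it executes the original code up to $\atr$, sets $a_{\attacktransaction}=\mytrue$, executes $\atr$ with writes redirected to primed copies (thereby hiding them from everyone else, exactly as the postponed commit does in the original \sic{} execution), records the distinguished written variable in $a_{\attackstore}$, and sets $\hb{}=\mytrue$. Each subsequent transaction $\atr_i$ in the chain is executed by its original process acting as a Happens-Before Helper; the pledge mechanism (the $p.hbh$ flag reaching $\hbhjoin$) is discharged by the very $\rfo$, $\sto$, or $\cfo$ edge that justifies $(\atr_{i-1},\atr_i)\in\hbo_t$, and the write-write guard blocking stores where $\anaddr.event'\neq\perp$ is never triggered thanks to the write-write freedom guaranteed by the normal form. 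When $\atr_n$ performs its read of the variable named in $a_{\attackstore}$, the end-of-transaction check fires and the error state is reached.

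For the ``if'' direction, I would proceed by projecting any run $\pi$ of $\sem{\aprog}$ that reaches the error state back onto the original program. The attacker's prime-copy writes are reinterpreted as the writes of a single delayed transaction $\atr$ whose commit is postponed until immediately after the last transaction of $\pi$. The resulting sequence of events is a valid \sic{} execution of $\aprog$: the primed-variable discipline ensures no other transaction reads the delayed writes (matching the \sic{} snapshots taken before $\atr$'s commit), and the blocked-store guard on $\anaddr.event'$ guarantees no write-write conflict with $\atr$. The instrumentation invariants --- $p.hbh$ must become $\hbhjoin$ in every helper transaction, and the final helper transaction must read the variable stored in $a_{\attackstore}$ --- directly witness a chain of $\po\cup\rfo\cup\sto\cup\cfo$ edges from $\atr$'s issue back to $\atr$'s commit, i.e., a cycle in $\hbo_t$. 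By Theorem~\ref{th:acyclicity}, the corresponding trace is non-serializable, so $\aprog$ is not robust.

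The main obstacle is the minimal violation lemma: showing that whenever $\hbo_t$ is cyclic one can extract a witness of the precise shape consumed by the instrumentation (one delayed transaction, a forward chain of helpers that avoid write-write conflicts with the delayed one, terminating in a read of one of its written variables). This requires analyzing the structure of $\hbo_t$-cycles under the \sic{} axioms and, in particular, arguing that among the edges of any such cycle at least one is a $\cfo$ edge leaving the chosen delayed transaction --- exactly the edge the instrumentation closes through the final read of $a_{\attackstore}$. Once this combinatorial normal form is in hand, both simulation arguments reduce to a step-by-step correspondence between instrumented transitions and original \sic{} transitions, with the primed variables playing the role of the buffered writes of the unique delayed commit.
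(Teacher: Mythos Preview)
Your proposal is correct and follows essentially the paper's own route: the paper likewise first establishes a minimal-violation normal form (a single delayed transaction $\atr$ whose issue and commit are linked through a chain of helper transactions that avoid $\atr$'s write set, with $\cfo$ edges at both ends of the chain), and then proves the two directions by exactly the bidirectional simulation you describe. One small muddle in your final paragraph: the ``$\cfo$ edge leaving the chosen delayed transaction'' and ``the edge the instrumentation closes through the final read of $a_{\attackstore}$'' are two \emph{distinct} $\cfo$ edges---the former is the \emph{first} edge of the chain (it is what lets the first helper discharge its pledge, since $\po$, $\rfo$, and $\sto$ are all impossible out of a not-yet-committed issue), whereas the latter is the \emph{last} edge into $\commitact(\apr,\atr)$.
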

\vspace{-0.7mm}
If a program is not robust, this implies that the execution of the program under \sic{} results in a trace where the happens-before is cyclic.
Which is possible only if the program contains at least one delayed transaction.
In the proof of this theorem, we show that is sufficient to search for executions that contain a single delayed transaction.

Notice that in the instrumentation of the attacker, the delayed transaction must contain a read and write instructions on different variables. Also, the transactions of the happens-before helpers must not contain a write to a variable that the delayed transaction writes to. 
The following corollary states the complexity of checking robustness for finite-state programs\,\footnote{Programs with a bounded number of variables taking values from a bounded domain.} against snapshot isolation. It is a direct consequence of Theorem~\ref{th:SIfinal} and of previous results concerning the reachability problem in concurrent programs running over a sequentially-consistent memory, with a fixed~\cite{DBLP:conf/focs/Kozen77} or parametric number of processes~\cite{DBLP:journals/tcs/Rackoff78}.
\vspace{-1mm}
%
%
\begin{corollary}\label{theorem:InstRobus}
Checking robustness of finite-state programs against snapshot isolation is PSPACE-complete when the number of processes is
fixed and\\ EXPSPACE-complete, otherwise.
\end{corollary}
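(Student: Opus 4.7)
The plan is to derive both upper and lower bounds from Theorem~\ref{th:SIfinal}, which already reduces robustness against \sic{} to reachability in the instrumented program $\sem{\aprog}$ under SC. For the upper bounds, the first step is to check that $\sem{\aprog}$ preserves finite-stateness with only a polynomial blow-up: the instrumentation adds a bounded number of auxiliary flags per process, per shared variable, and globally, each ranging over a small finite domain (essentially $\{\perp,\loadacc,\storeacc,\mytrue\}$ or shared-variable names), and the number of processes is unchanged. Hence when $\aprog$ is finite-state in the sense of the footnote, $\sem{\aprog}$ is a finite-state concurrent program whose control/data space is polynomial in that of $\aprog$. From there, I would invoke Kozen's PSPACE algorithm for reachability in concurrent Boolean programs for the fixed-process case, and Rackoff's EXPSPACE algorithm for VASS coverability (which captures reachability in finite-state concurrent programs with an unbounded number of replicated processes) for the parametric case.

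For the matching lower bounds, my plan is to reduce SC state reachability to non-robustness. Given a concurrent program $P$ and a target state $q$, I would construct a transactional program $P'$ that (i) simulates $P$ by wrapping every single read or write of $P$ inside its own singleton transaction, and (ii) runs in parallel two extra processes implementing the Write Skew gadget of Figure~\ref{fig:rob0}, with both of its two-instruction transactions guarded by an $\plog{assume}$ on a fresh shared flag $\mathit{go}$ that is set to true exactly when the simulation of $P$ reaches $q$. Since singleton transactions have their issue and commit events adjacent in any execution, the simulation portion of $P'$ yields the same traces under \sic{} as under serializability, so any non-serializable trace of $P'$ must involve the WS gadget, which in turn can fire only when $q$ is reachable in $P$. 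Conversely, if $q$ is reachable then the standard WS anomaly from Section~\ref{sec:overview} witnesses non-robustness of $P'$. The reduction is polynomial and preserves both the finite-state assumption and the ``fixed versus parametric'' distinction on the number of processes, so PSPACE- and EXPSPACE-hardness transfer from SC reachability to robustness.

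The main obstacle lies on the lower-bound side: I must rule out spurious non-serializable traces arising from the $P$-simulation (handled by the singleton-transaction encoding, since their issue/commit pairs are never split by other events) and confirm that adding a fixed number of gadget processes on top of the processes of $P$ does not alter the parametric status required by Rackoff's hardness result. Granting these routine bookkeeping arguments, the corollary follows directly from Theorem~\ref{th:SIfinal} together with the cited complexity results of Kozen and Rackoff.
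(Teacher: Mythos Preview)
Your proposal is correct and, for the upper bounds, identical to what the paper does: the instrumentation of Theorem~\ref{th:SIfinal} is a polynomial-size transformation that preserves finite-stateness and the number of processes, so membership follows from Kozen and Rackoff exactly as you say.

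For the lower bounds the paper gives no argument at all; it simply calls the corollary a ``direct consequence'' of Theorem~\ref{th:SIfinal} and the cited complexity results, implicitly treating the hardness direction as folklore. Your explicit reduction---wrapping every SC action of $P$ in a singleton transaction and appending a $\mathit{go}$-guarded Write-Skew gadget---is the natural way to substantiate that claim, and it is sound: programs whose transactions each contain a single read or write are robust (this is Corollary~\ref{lemma:SIRobustExe}(a) in the appendix), so any non-serializable trace of $P'$ must exercise the gadget, which in turn requires $\mathit{go}$ to have been set. Two small points worth making explicit in a write-up: the gadget's two shared variables should be fresh (disjoint from those of $P$), and in the parametric case the two additional gadget processes are simply two new process templates, so the reduction stays within the ``unbounded number of instances of finitely many templates'' model underlying the Rackoff/Lipton bound.
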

\vspace{-0.7mm}
The instrumentation can be extended to SQL (select/update) queries where a statement may include expressions over a finite/infinite set of variables, e.g., by manipulating a set of flags x.event for each statement instead of only one. 

\section{Proving Program Robustness}\label{sec:PPR}
As a more pragmatic alternative to the reduction in the previous section, we define an approximated method for proving robustness which is inspired by Lipton's reduction theory~\cite{DBLP:journals/cacm/Lipton75}.

\smallskip
\noindent
{\bf Movers.} Given an execution $\tau = \event_1\cdot \ldots\cdot \event_n$ of a program $\aprog$ under serializability (where each event $\event_i$ corresponds to executing an entire transaction), we say that the event $\event_i$ \emph{moves right (resp., left)} in $\tau$ if $\event_1\cdot \ldots\cdot \event_{i-1}\cdot \event_{i+1}\cdot \event_i\cdot \event_{i+2}\cdot \ldots\cdot \event_n$ (resp., $\event_1\cdot\ldots\cdot\event_{i-2}\cdot\event_i\cdot\event_{i-1}\cdot\event_{i+1}\cdot\ldots\cdot\event_n$) is also a valid execution of $\aprog$, the process of $\event_i$ is different from the process of $\event_{i+1}$ (resp., $\event_{i-1}$), and both executions reach to the same end state $\sigma_n$.
For an execution $\tau$, let $\mathsf{instOf}_\tau(\event_i)$ denote the transaction that generated the event $\event_i$.
A transaction $\atr$ of a program $\aprog$ is a \emph{right (resp., left) mover} if for all executions $\tau$ of $\aprog$ under serializability, the event $\event_i$ with $\mathsf{instOf}(\event_i) = \atr$ moves right (resp., left) in $\tau$.


If a transaction $\atr$ is not a right mover, then there must exist an execution $\tau$ of $\aprog$ under serializability and an event $\event_i$ of $\tau$ with $\mathsf{instOf}(\event_i) = \atr$ that does not move right. This implies that there must exist another $\event_{i+1}$ of $\tau$ which caused $\event_i$ to not be a right mover. Since $\event_i$ and $\event_{i+1}$ do not commute, then this must be because of either a write-read, write-write, or a read-write dependency. If  $\atr'=\mathsf{instOf}(\event_{i+1})$, we say that $\atr$ is not a right mover because of $\atr'$ and some dependency that is either write-read, write-write, or read-write.
Notice that when $\atr$ is not a right mover because of $\atr'$ then $\atr'$ is not a left mover because of $\atr$.

We define $\mrfo$ as a binary relation between transactions such that $(\atr,\atr') \in \mrfo$ when $\atr$ is \emph{not} a right mover because of $\atr'$ and a write-read dependency. We define the relations $\msto$ and $\mcfo$ corresponding to write-write and read-write dependencies in a similar way.

\smallskip
\noindent
{\bf Read/Write-free transactions.}  Given a transaction $\atr$, we define $\atr\setminus\{r\}$ as a variation of $\atr$ where all the reads from shared  variables are replaced with non-deterministic reads, i.e., $\langle reg\rangle := \langle var\rangle$ statements are replaced with $\langle reg\rangle := \star$ where $\star$ denotes non-deterministic choice. We also define $\atr\setminus\{w\}$ as a variation of $\atr$ where all the writes to shared  variables in $\atr$ are disabled. Intuitively, recalling the reduction to SC reachability in Section~\ref{sec:Instr}, $\atr\setminus\{w\}$ simulates the delay of a transaction by the Attacker, i.e., the writes are not made visible to other processes, and $\atr\setminus\{r\}$ approximates the commit of the delayed transaction which only applies a set of writes.

\smallskip
\noindent
{\bf Commutativity dependency graph.} Given a program $\aprog$, we define the commutativity dependency graph as a graph where vertices represent transactions and their read/write-free variations. Two vertices which correspond to the original transactions in $\aprog$ are related by a program order edge, if they belong to the same process. The other edges in this graph represent the ``non-mover'' relations $\mrfo$, $\msto$, and $\mcfo$.
Given a program $\aprog$, we say that the commutativity dependency graph of $\aprog$ contains a \emph{non-mover cycle} if there exist a set of transactions $\atr_0, \atr_1, \ldots, \atr_n$ of $\aprog$ such that the following hold:
 \begin{enumerate}[label=(\alph*), topsep=0pt]
\item $(\atr''_0,\atr_1) \in \mcfo$ where $\atr''_0$ is the write-free variation of $\atr_0$ and $\atr_1$ does not write to a variable that $\atr_0$ writes to;
\item for all $i\in[1,n]$, $(\atr_i,\atr_{i+1})\in (\po \cup \mrfo \cup \msto \cup \mcfo)$, $\atr_i$ and $\atr_{i+1}$ do not write to a shared variable that $\atr_0$ writes to;
\item $(\atr_n,\atr'_0) \in \mcfo$ where $\atr'_0$ is the read-free variation of $\atr_0$ and $\atr_n$ does not write to a variable that $\atr_0$ writes to.
\end{enumerate}

A non-mover cycle approximates an execution of the instrumentation defined in Section~\ref{sec:Instr} in between the moment that the Attacker delays a transaction $\atr_0$ (which here corresponds to the write-free variation $\atr''_0$) and the moment where $\atr_0$ gets committed (the read-free variation $\atr'_0$).


The following theorem shows that the acyclicity of the commutativity dependency graph of a program implies the robustness of the program. Actually, the notion of robustness in this theorem relies on a slightly different notion of trace where store-order and write-order dependencies take into account values, i.e., store-order relates only writes writing different values and the write-order relates a read to the oldest write (w.r.t. execution order) writing its value. This relaxation helps in avoiding some harmless robustness violations due to for instance, two transactions writing the same value to some variable.
\vspace{-0.7mm}
\begin{theorem}\label{soundnessthm}
For a program $\aprog$, if the commutativity dependency graph of $\aprog$ does not contain non-mover cycles, then $\aprog$  is robust.
\end{theorem}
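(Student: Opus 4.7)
I would prove the contrapositive: assuming $\aprog$ is not robust against \sic{}, I would exhibit a non-mover cycle in its commutativity dependency graph. By Theorem~\ref{th:SIfinal}, non-robustness gives a serializable execution $\rho$ of the instrumentation $\sem{\aprog}$ that reaches the error state. This execution exposes a single delayed transaction $\atr_0$ chosen by the attacker together with a sequence of happens-before helpers executing transactions $\atr_1,\ldots,\atr_n$, such that $\atr_n$ reads the variable whose name is stored in $a_{\attackstore}$ (the ``special'' store variable of $\atr_0$). Because the writes of the delayed $\atr_0$ are redirected to primed copies invisible to the rest of the system, the attacker's contribution in $\rho$ has the shape of $\atr_0\setminus\{w\}$; symmetrically, the implicit commit that would close the happens-before cycle behaves as $\atr_0\setminus\{r\}$.

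From $\rho$ I would extract the candidate cycle $\atr_0\setminus\{w\},\atr_1,\ldots,\atr_n,\atr_0\setminus\{r\}$. The side conditions in the definition of a non-mover cycle already hold by construction of the instrumentation: line~\eqref{Equation:HBHBlockStore} in the helpers blocks any write to a variable flagged by $\anaddr.event'$, so no $\atr_i$ writes to a variable written by $\atr_0$. It remains to show that each consecutive pair is a non-mover edge of the appropriate kind.

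For the interior edges $(\atr_i,\atr_{i+1})$, the pledge mechanism (lines \eqref{Equation:HBHMeetConditionLoad} and \eqref{Equation:HBHMeetConditionStore}) guarantees one of four situations: same-process ($\po$), a read of a variable last written by an earlier helper ($\mrfo$), a write to a variable last written by an earlier helper ($\msto$), or a write to a variable last read by an earlier helper ($\mcfo$). In each case I would argue that swapping $\atr_i$ and $\atr_{i+1}$ in the underlying serializable execution either violates program order or changes the value read/overwritten by one of the two transactions, and hence the two transactions fail to commute in the value-sensitive trace notion used for Theorem~\ref{soundnessthm}. For the boundary edges, the first helper $\atr_1$ can only join the chain by writing to a variable that the delayed $\atr_0$ read from (as explained after line~\eqref{Equation:HBHMeetConditionStore} in Section~\ref{subsec:HBHelpers}), which yields $(\atr_0\setminus\{w\},\atr_1)\in\mcfo$; and $\atr_n$ reads the variable in $a_{\attackstore}$, which $\atr_0$ writes, giving $(\atr_n,\atr_0\setminus\{r\})\in\mcfo$.

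The main obstacle, I expect, is the bookkeeping needed to justify the boundary non-mover edges against the read/write-free variants $\atr_0\setminus\{w\}$ and $\atr_0\setminus\{r\}$ rather than $\atr_0$ itself. Here I would lean on the observation that, along $\rho$, the attacker's delayed transaction is semantically decomposed precisely into a read-only prefix (its reads, done on the ordinary shared variables before any writes are made visible) and a write-only suffix (the implicit commit that propagates the writes). A commutation argument against $\atr_1$ (resp.\ $\atr_n$) for the full $\atr_0$ would be unsound in general because $\atr_0$'s reads and writes together can commute while neither half does; but for the decomposed halves the non-commutation reduces to a single read-write clash, which is exactly what $\mcfo$ captures. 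Once this decomposition is formalised, the rest of the argument is a routine case analysis over dependency types, and contraposition delivers the theorem.
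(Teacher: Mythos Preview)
Your proposal is correct and follows the same contrapositive route as the paper: from the error-reaching serializable run of $\sem{\aprog}$ you read off the delayed transaction $\atr_0$ and the helper chain, then certify each edge of the candidate non-mover cycle, with the side condition on writes handled by the instrumentation's blocking rule. The paper's write-up differs only in presentation: for each edge it explicitly builds a witness serializable execution of $\aprog$ (placing $\atr_0\setminus\{r\}$ after the helpers, resp.\ $\atr_0\setminus\{w\}$ before them) and argues that swapping the two endpoints changes the end state or invalidates the run, which is exactly the non-mover condition; your higher-level ``swap changes the value read/overwritten'' is the same argument.
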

\vspace{-0.5mm}
\vspace{-3mm}
\section{Experiments}\label{sec:Exper:paper}
\vspace{-0.7mm}
To test the applicability of our robustness checking algorithms, we have considered a benchmark of 10 applications extracted from the literature related to weakly consistent databases in general.
A first set of applications are open source projects that were implemented to be run over the Cassandra database, extracted from~\cite{DBLP:conf/pldi/BrutschyD0V18}. The second set of applications is composed of: TPC-C \cite{TPCC}, an on-line transaction processing benchmark widely used in the database community, SmallBank, a simplified representation of a banking application~\cite{DBLP:conf/icde/AlomariCFR08}, FusionTicket, a movie ticketing application~\cite{DBLP:conf/cloud/HoltBZPOC16}, Auction, an online auction application~\cite{DBLP:conf/concur/0002G16}, and Courseware, a course registration service extracted from \cite{DBLP:conf/popl/GotsmanYFNS16,DBLP:conf/concur/NagarJ18}.

A first experiment concerns the reduction of robustness checking to SC reachability. For each application, we have constructed a client (i.e., a program composed of transactions defined within that application) with a fixed number of processes (at most 3) and a fixed number of transactions (between 3 and 7 transactions per process). We have encoded the instrumentation of this client, defined in Section~\ref{sec:Instr}, in the Boogie programming language \cite{DBLP:conf/fmco/BarnettCDJL05} and used the Civl verifier~\cite{DBLP:conf/cav/HawblitzelPQT15} in order to check whether the assertions introduced by the instrumentation are violated (which would represent a robustness violation). Note that since clients are of fixed size, this requires no additional assertions/invariants (it is an instance of bounded model checking). The results are reported in Table~\ref{tab:exper}. We have found two of the applications, Courseware and SmallBank, to \emph{not} be robust against snapshot isolation. The violation in Courseware is caused by transactions RemoveCourse and EnrollStudent that execute concurrently, RemoveCourse removing a course that has no registered student and EnrollStudent registering a new student to the same course. We get an invalid state where a student is registered for a course that was removed. SmallBank's violation contains transactions Balance, TransactSaving, and WriteCheck. One process executes WriteCheck where it withdraws an amount from the checking account after checking that the sum of the checking and savings accounts is bigger than this amount. Concurrently, a second process executes TransactSaving where it withdraws an amount from the saving account after checking that it is smaller than the amount in the savings account. Afterwards, the second process checks the contents of both the checking and saving accounts. We get an invalid state where the sum of the checking and savings accounts is negative.

Since in the first experiment we consider fixed clients, the lack of assertion violations doesn't imply that the application is robust (this instantiation of our reduction can only be used to reveal robustness violations). Thus, a second experiment concerns the robustness proof method based on commutativity dependency graphs (Section~\ref{sec:PPR}). For the applications that were not identified as non-robust by the previous method, we have used Civl to construct their commutativity dependency graphs, i.e., identify the ``non-mover'' relations $\mrfo$, $\msto$, and $\mcfo$ (Civl allows to check whether some code fragment is a left/right mover). In all cases, the graph didn't contain non-mover cycles, which allows to conclude that the applications are robust.

The experiments show that our results can be used for finding violations and proving robustness, and that they apply to a large set of interesting examples. Note that the reduction to SC and the proof method based on commutativity dependency graphs are valid for programs with SQL (select/update) queries.

%
%

\begin{table}[t]
\caption{An overview of the analysis results.
CDG stands for commutativity dependency graph.
The columns PO and PT show the number of proof obligations and proof time in second, respectively.
\textsf{T} stands for trivial when the application has only read-only transactions.
}
\centering
\begin{tabular}{|c|c|c|c|c|c|c|}
\hline
Application & \#Transactions & Robustness & \multicolumn{2}{c|}{Reachability Analysis} &  \multicolumn{2}{c|}{CDG Analysis}\\
 &  &  & PO & PT & PO & PT\\
\hline
Auction & 4 & \cmark & 70 & 0.3 & 20 & 0.5\\
\hline
Courseware &  5 & \xmark &  59 & 0.37 & \textsf{na} & \textsf{na} \\ 
\hline
FusionTicket & 4 & \cmark & 72 & 0.3 & 34 & 0.5\\
\hline
SmallBank & 5 & \xmark & 48 & 0.28 & \textsf{na} & \textsf{na}\\ 
\hline
TPC-C & 5 & \cmark & 54 & 0.7 & 82 & 3.7\\
\hline
\hline
Cassieq-Core &  8 & \cmark & 173 & 0.55 & 104 & 2.9\\
\hline
Currency-Exchange &  6 & \cmark & 88  & 0.35 & 26 & 3.5\\
\hline
PlayList & 14 & \cmark & 99 & 4.63 & 236 & 7.3 \\
\hline
RoomStore & 5 & \cmark & 85 & 0.3 & 22 & 0.5\\
\hline
Shopping-Cart & 4 & \cmark & 58 & 0.25 & \textsf{T} & \textsf{T} \\
\hline
\end{tabular}
\label{tab:exper}
\end{table}
\vspace{-3.7mm} 
\section{Related Work}
Decidability and complexity of robustness has been investigated in the context of relaxed memory models such as TSO and Power \cite{DBLP:conf/icalp/BouajjaniMM11,DBLP:conf/esop/BouajjaniDM13,DBLP:conf/icalp/DerevenetcM14}.
Our work borrows some high-level principles from~\cite{DBLP:conf/esop/BouajjaniDM13} which addresses the robustness against TSO. We reuse the high-level methodology of characterizing minimal violations according to some measure and defining reductions to SC reachability using a program instrumentation. Instantiating this methodology in our context is however very different, several fundamental differences being:
\begin{enumerate}[label={--}, topsep=0pt]
\item \sic{} and TSO admit different sets of relaxations and \sic{} is a model of transactional databases.
\item We use a different notion of measure: the measure in \cite{DBLP:conf/esop/BouajjaniDM13} counts the number of events between a write issue and a write commit while our notion of measure counts the number of delayed transactions. This is a first reason for which the proof techniques in~\cite{DBLP:conf/esop/BouajjaniDM13} don't extend to our context.
\item Transactions induce more complex traces: two transactions might be related by several dependency relations since each transaction may contain multiple reads and writes to different locations. In TSO, each action is a read or a write to some location, and two events are related by a single dependency relation. Also, the number of dependencies between two transactions depends on the execution since the set of reads/writes in a transaction evolves dynamically. 
\end{enumerate}
Other works~\cite{DBLP:conf/icalp/BouajjaniMM11,DBLP:conf/icalp/DerevenetcM14} define decision procedures which are based on the theory of regular languages and do not extend to infinite-state programs like in our case. 

\begin{wrapfigure}{r}{0.43\textwidth}
\vspace{-30pt}
  \begin{center}
    \lstset{basicstyle=\ttfamily\scriptsize}
\begin{minipage}[c]{23mm}
\begin{lstlisting}[language=Java10]
    p1:
t1: [ if (x > y)
        r1 = x - y
        x  = y ]
\end{lstlisting}
\end{minipage}
\begin{minipage}[c]{2mm}
\footnotesize{$||$}
\end{minipage}
\begin{minipage}[c]{23mm}
\begin{lstlisting}[language=Java10]
    p2:
t2: [ if (y > x)
        r2 = y - x
        y  = x ]
\end{lstlisting}
\end{minipage}
  \end{center}
  \vspace{-20pt}
  \caption{A robust program.}
  \vspace{-20pt}
\label{fig:BAB}
\end{wrapfigure}
As far as we know, our work provides the first results concerning the decidability and the complexity of robustness checking in the context of transactions.
The existing work on the verification of robustness for transactional programs provide either over- or under-approximate analyses. Our commutativity dependency graphs are similar to the static dependency graphs used in~\cite{DBLP:conf/concur/0002G16,DBLP:conf/popl/BrutschyD0V17,DBLP:conf/pldi/BrutschyD0V18,DBLP:journals/jacm/CeroneG18}, but they are more precise, i.e., reducing the number of false alarms. The static dependency graphs record happens-before dependencies between transactions based on a syntactic approximation of the variables accessed by a transaction. For example, our techniques are able to prove that the program in Figure \ref{fig:BAB} is robust, while this is not possible using static dependency graphs. The latter would contain a dependency from transaction \emph{$\atr_1$} to \emph{$\atr_2$} and one from \emph{$\atr_2$} to \emph{$\atr_1$} just because syntactically, each of the two transactions reads both variables and may write to one of them. Our dependency graphs take into account the semantics of these transactions and do not include this happens-before cycle.
Other over- and under-approximate analyses have been proposed in~\cite{sureshconcur2018}. They are based on encoding executions into first order logic, bounded-model checking for the under-approximate analysis, and a sound check for proving a cut-off bound on the size of the happens-before cycles possible in the executions of a program, for the over-approximate analysis.
The latter is strictly less precise than our method based on commutativity dependency graphs. For instance, extending the TPC-C application with additional transactions will make the method in~\cite{sureshconcur2018} fail while our method will succeed in proving robustness (the three transactions are for adding a new product, adding a new warehouse based on the number of customers and warehouses, and adding a new customer, respectively).

Finally, the idea of using Lipton's reduction theory for checking robustness has been also used in the context of the TSO memory model~\cite{DBLP:conf/cav/BouajjaniEMT18}, but the techniques are completely different, e.g., the TSO technique considers each update in isolation and doesn't consider non-mover cycles like in our commutativity dependency graphs.

\bibliographystyle{splncs04}
\bibliography{main}

\appendix
\newpage
\section{Programs}\label{app:programs}


\subsection{Program Syntax}

We consider a simple programming language grammar which is defined in Figure~\ref{Figure:syntax}. 
A program is parallel composition of \emph{processes} distinguished using a set of identifiers $\mathbb{P}$. 
Each process is a sequence of \emph{transactions} and each transaction is a sequence of \emph{labeled instructions}. 
Each transaction starts with a \plog{begin} instruction and finishes with a \plog{commit} instruction.
Each other instruction is either an assignment to a process-local \emph{register} from a set $\mathbb{R}$ or to a \emph{shared variable} from a set $\mathbb{V}$, or an \plog{assume} statement.
The read/write assignments use values from a data domain $\mathbb{D}$.
An assignment to a register $\langle reg\rangle := \langle var\rangle$ is called a \emph{read} of $\langle var\rangle$ and an assignment to a shared variable $\langle var\rangle := \langle reg\text{-}expr\rangle$ is called a \emph{write} to $\langle var\rangle$ ($\langle reg\text{-}expr\rangle$ is an expression over registers whose syntax we leave unspecified since it is irrelevant for our development).
The \plog{assume} $\langle bexpr\rangle$ blocks the process if the Boolean expression $\langle bexpr\rangle$ over registers is false.

\begin{figure}
{\footnotesize
\setlength{\grammarindent}{7em}
\begin{grammar}
<prog>  ::= \plog{program} <process>$^{*}$

<process> ::= \plog{process} <pid> \plog{regs} <reg>$^{*}$ \\ <ltxn>$^{*}$

<ltxn> ::=  <binst> <linst>$^{*}$ <einst>

<binst> ::= <label>":" \plog{begin}";" \plog{goto} <label>";"

<einst> ::= <label>":" \plog{end}";" \plog{goto} <label>";"

<linst> ::= <label>":" <inst>";" \plog{goto} <label>";"

<inst> ::= <reg> ":=" <var>
  \alt <var> ":=" <reg-expr>
  \alt \plog{assume} <bexpr>
\end{grammar}}
\vspace{-2mm}
\caption{Program syntax. $a^{*}$ indicates zero or more occurrences of $a$.  $\langle pid\rangle$, $\langle reg\rangle$, $\langle label \rangle$, and $\langle var\rangle$ represent a process identifier, a register, a label, and a shared variable, respectively. $\langle reg\text{-}expr \rangle$ is an expression over registers while $\langle bexpr \rangle$ is a Boolean expression over registers.
}
\label{Figure:syntax}
\end{figure}

\subsection{Program Semantics Under SnapShot Isolation}

The semantics of a program under \sic{} is defined as follows. 
The shared variables are stored in a central memory and each process keeps a replicated copy of the central memory. 
A process starts a transaction by discarding its local copy and fetching the values of the shared variables from the central memory. When a process commits a transaction, it merges its local copy of the shared  variables with the one stored in the central memory in order to make its updates visible to all processes. During the execution of a transaction, the process stores the writes to shared variables only in its local copy and reads only from its local copy.
When a process merges its local copy with the centralized one, it is required that there were no concurrent updates that occurred after the last fetch from the central memory to a shared variable that was updated by the current transaction. Otherwise, the transaction is aborted and its effects discarded. 

Thus, a program configuration is a tuple $\gsconf = (\lsconf,\timest,\logconf)$ where $\lsconf: \mathbb{P} \rightarrow \lstatesconf$ associates a local state in $\lstatesconf$ to each process in $\mathbb{P}$, $\timest: \mathbb{V} \rightarrow \mathbb{T}$ stores the largest timestamp for each shared variable, and $\logconf: \mathbb{V} \rightarrow \mathbb{D}$ holds the global valuation of shared variables.
A local state is a tuple  $\tuple{\pcconf,\storeconf,\txnwrsconf,\valconf}$ where
$\pcconf \in\labconf$ is the program counter, i.e., the label of the next instruction to be executed, $\storeconf: \mathbb{V} \rightarrow \mathbb{D}$ is the local valuation of the shared variables, $\txnwrsconf: \mathbb{V} \rightarrow \{\bot,1\}$ is a local log which marks shared variables which were updated in a transaction, and $\valconf: \mathbb{R} \rightarrow \mathbb{D}$ is the valuation of the local registers. For a local state $s$, we use $s.\pcconf$ to denote the program counter component of $s$, and similarly for all the other components of $s$. Given a transaction $\atr \in \mathbb{T} \times \mathbb{T}$, we use $\atr.\asti$ to denote the start time of transaction $\atr$ and $\atr.\acti$ to denote the commit time of $\atr$.
Before merging $\storeconf$ with $\logconf$, after executing a transaction $\atr$, we check that for every variable $x$ that $\atr$ writes to ($\txnwrsconf(x) \neq \bot$) we have that $\timest(x) < \atr.\asti$ (i.e., there were no concurrent write to $x$). Then, we store the value of $\storeconf(x)$ for every variable $x$ that $\atr$ writes to ($\txnwrsconf(x) \neq \bot$) in $\logconf(x)$. Also, for every variable $x$ that $\atr$ writes to, we store $\atr.\acti$ in $\timest(x)$.

\begin{figure}[t]
\centering
\small\addtolength{\tabcolsep}{-10pt}
\therules{
\therule
{$\text{\plog{begin}}\in\instrOf(\lsconf(\apr).\pcconf)$\quad $\mathsf{img}(\lsconf.\timest) < \atr.\asti$\quad $s = \lsconf(\apr)[\txnwrsconf \mapsto \epsilon,\storeconf\mapsto\logconf,\pcconf\mapsto \mathsf{next}(\pcconf)]$}
{$(\lsconf,\timest,\logconf,\lkconf)\mpitrans{\beginact(\apr,\atr)} (\lsconf[\apr\mapsto s],\timest,\logconf,\lkconf)$}

\dfrac
{\splitdfrac{\text{$\theload{\areg}{\anaddr}\in\instrOf(\lsconf(\apr).\pcconf)$\quad
$\lsconf(\apr).\storeconf[\anaddr] = \aval$ \quad
$\mathit{rval} = \lsconf(\apr).\valconf[\areg\mapsto \aval]$}}
{\text{$s = \lsconf(\apr)[\valconf \mapsto \mathit{rval},\pcconf\mapsto \mathsf{next}(\pcconf)]$}}}
{\text{$(\lsconf,\timest,\logconf,\lkconf)\mpitrans{\loadact(\apr,\atr,\anaddr,\aval)} (\lsconf[\apr\mapsto s],\timest,\logconf,\lkconf)$}}\\[8mm]

\dfrac
{\splitfrac{\text{$\thestore{\anaddr}{\aval}\in\instrOf(\lsconf(\apr).\pcconf)$\quad
$\mathit{log} = \lsconf(\apr).\txnwrsconf[\anaddr\mapsto 1]$\quad
$\mathit{store} = \lsconf(\apr).\storeconf[\anaddr \mapsto\aval]$}}
{\text{$s = \lsconf(\apr)[\txnwrsconf \mapsto \mathit{log},\storeconf \mapsto \mathit{store},\pcconf\mapsto \mathsf{next}(\pcconf)]$}}}
{\text{$(\lsconf,\timest,\logconf,\lkconf)\mpitrans{\issueact(\apr,\atr,\anaddr,\aval)} (\lsconf[\apr\mapsto s],\timest,\logconf,\lkconf)$}}\\[8mm]

\dfrac
{\splitfrac{\splitfrac{\text{$\text{\plog{end}}\in\instrOf(\lsconf(\apr).\pcconf)$\quad $\forall\anaddr\in\mathbb{V}.\ \mathit{log}[\anaddr]=\bot \lor \timest(\anaddr) < \atr.\asti$ \quad $\mathsf{img}(\lsconf.\timest) < \atr.\acti$}}{\text{
$\mathit{Log} = \logconf[\anaddr\mapsto \storeconf[\anaddr]: \anaddr\in\mathbb{V}, \mathit{log}[\anaddr]\neq\bot]$  }}}
{\text{$\mathit{tstamp} = \timest[\anaddr\mapsto \asti.\acti: \anaddr\in\mathbb{V}, \mathit{log}[\anaddr]\neq\bot]$ \quad $s = \lsconf(\apr)[\pcconf\mapsto \mathsf{next}(\pcconf)]$}}}
{\text{$(\lsconf,\timest,\logconf,\lkconf)\mpitrans{\commitact(\apr,\atr)} (\lsconf[p\mapsto s],\logconf \mapsto \mathit{Log},\timest \mapsto \mathit{tstamp},\lkconf)$}}
}
\vspace{-3mm}
\caption{The set of transition rules defining snapshot isolation semantics model. We assume that all the events which come from the same transaction use a unique transaction identifier $\atr:(\asti,\acti)$ that has two components. For a function $f$, we use $f[a\mapsto b]$ to denote a function $g$ such that $g(c)=f(c)$ for all $c\neq a$ and $g(a)=b$. The function $\instrOf$ returns the set of instructions labeled by some given label while $\mathsf{next}$ gives the next instruction to execute.}
\label{Table:MPRules}
\end{figure}

Then, the semantics of a program $\aprog$ under snapshot isolation consistency model is defined using a labeled transition system (LTS) $[\aprog]_{\sic{}}=(\gstatesconf,\eventsconf,\gsconf_0,\rightarrow)$ where $\gstatesconf$ is the set of program configurations, $\eventsconf$ is a set of transition labels called \emph{events}, $\gsconf_0$ is the initial program configuration, and $\rightarrow\subseteq \gstatesconf\times \eventsconf\times \gstatesconf$ is the transition relation. 
The set of events under \sic{} is defined as follow.
\begin{align*}
\eventsconf =\ & \{\ \beginact(\apr,\atr), \loadact(\apr,\atr,\anaddr,\aval), \issueact(\apr,\atr,\anaddr,\aval), \commitact(\apr,\atr): \apr\in \mathbb{P}, \atr\in \mathbb{T}\times\mathbb{T}, \anaddr\in \mathbb{V}, \aval\in \mathbb{D}\}
\end{align*}
where $\beginact$ and $\commitact$ label transitions corresponding to the start and the commit of a transaction, respectively. 
$\issueact$ and $\loadact$ label transitions corresponding to writing, resp., reading, a shared variable during some transaction. 

The transition relation $\rightarrow$ is defined in Figure~\ref{Table:MPRules}. For readability, the events labeling a transition are written on top of $\rightarrow$. A $\beginact$ transition resets the local valuation of the shared variables and fetches their values from the central memory. A $\commitact$ transition applies the writes performed in a transaction to the central memory by merging the contents of the local copy $\storeconf$ with the central memory $\logconf$. An $\loadact$ transition reads the value of a shared-variable from the local copy $\storeconf$ while an $\issueact$ transition applies a new write to the local copy $\storeconf$. 

An execution of program $\aprog$, under snapshot isolation, is a sequence of events $\event_1\cdot\event_2\cdot\ldots$ labeling the transitions, such that there exists a sequence of configurations $\gsconf_0\cdot\gsconf_1\cdot\ldots$  where $\gsconf_0$ is the initial configuration before $\aprog$ starts execution and $\gsconf_{i-1}\xrightarrow{\event_{i}}\gsconf_{i}$  is a valid transition for $i > 1$.
 of transitions. The set of executions of $\aprog$ under \sic{} is denoted by $\executionsconf_{\sic{}}(\aprog)$.
\section{Proofs of Section~\ref{sec:Instr}: Characterizations of \sic{} Trace-Robustness}

In this section, we describe the proof of Theorem \ref{th:SIfinal}.
We first reduce the robustness of a program $\aprog$ against \sic{} to the existence of some execution trace $\atrace \in \tracesconf_{\sic{}}(\aprog) \setminus \tracesconf_{\serc{}}(\aprog)$ that has a specific shape. We call a  trace $\atrace \in \tracesconf_{\sic{}}(\aprog) \setminus \tracesconf_{\serc{}}(\aprog)$, an anomaly.
Then, we show that of the anomaly of a particular shape is equivalent to an execution of the instrumented program reaching an error state.

First, we give an auxiliary lemma about the happens-before relation (between events).
In the remaining of this section, we use $\hbo^{1}$ to denote the happens-before without the transitive closure, i.e., $\hbo^{1} = (\po \cup \sto \cup \rfo \cup \cfo \cup \sametro)$.
To decide if two events in a trace are ``independent'' (or commutative) we use the information about the existence of a happens-before relation between the events. If two events are not related by  happens-before then they can be swapped while preserving the same happens-before. Thus, we extend the happens-before relation to obtain the \emph{happens-before through} relation as follows:

\begin{definition}[\cite{DBLP:conf/esop/BouajjaniDM13}]\label{Definition:HBThrough}
Let $\tau = \alpha\cdot a\cdot \beta\cdot b\cdot \gamma$ be a trace where $a$ and $b$ are events (or atomic macro events), and $\alpha$, $\beta$, and $\gamma$ are sequences of events (or  atomic macro events) under a semantics $\sic{}$.
We say that \emph{$a$ happens-before $b$ through $\beta$} if there is a non empty sub-sequence $c_1 \cdots c_n$ of $\beta$ that satisfies:
$$c_i\rightarrow_{\hbo^{1}} c_{i+1}\quad\text{ for all }i\in[0, n]$$
where $c_0= a$, $c_{n+1}= b$.
\end{definition}

We deduce from the definition of serializable traces, that a anomaly trace must contain at least an issue and a commit events of the same transaction that are related via the happens-before through relation. Otherwise, we can build another trace with the same happens-before where events are reordered such that every issue $\issueact(\apr,\atr)$ is immediately followed by the corresponding commit $\commitact(\apr,\atr)$. The latter is a serializable trace which contradicts the initial assumption.

\begin{lemma}\label{lemma:ViolForm1}
Given an anomaly $\tau$, there must exist a transaction $\atr$ such that $\tau = \alpha \cdot \issueact(\apr,\atr) \cdot \beta \cdot \commitact(\apr,\atr) \cdot \gamma$ and $\issueact(\apr,\atr)$ happens before $\commitact(\apr,\atr)$ through $\beta$.
\end{lemma}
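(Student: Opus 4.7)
The plan is to prove the contrapositive. Assume that for every transaction $\atr$ in $\tau$, writing $\tau = \alpha \cdot \issueact(\apr,\atr) \cdot \beta \cdot \commitact(\apr,\atr) \cdot \gamma$, the issue event $\issueact(\apr,\atr)$ does \emph{not} happen-before $\commitact(\apr,\atr)$ through $\beta$. I will show that $\tau$ is serializable, contradicting the anomaly hypothesis via Theorem~\ref{th:acyclicity}.

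The core tool is a Mazurkiewicz-style equivalence on event sequences: two sequences are \emph{equivalent} if one is obtained from the other by a finite number of adjacent swaps $e \cdot e' \leadsto e' \cdot e$ where $e \not\to_{\hbo^{1}} e'$. A direct inspection of the defining conditions for $\po$, $\rfo$, $\sto$, $\cfo$, and $\sametro$ shows that such swaps preserve the trace as a labeled structure: $\po$ and $\sametro$ depend only on the intrinsic process/transaction structure and cannot be altered by swapping non-$\hbo^{1}$-related events; and any two events related by $\rfo$, $\sto$, or $\cfo$ are necessarily $\hbo^{1}$-related, so they can never be swapped under our rule. Consequently equivalent sequences carry the same happens-before relation, so one is serializable iff the other is.

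The rearrangement proceeds transaction by transaction. For a given $\atr$ with $\tau = \alpha \cdot \issueact \cdot \beta \cdot \commitact \cdot \gamma$, I slide $\commitact$ leftward through $\beta$ until it lies immediately after $\issueact$. The key invariant is that any event $e \in \beta$ with $e \to_{\hbo^{1}} \commitact$ cannot be $\hbo^{1}$-reachable from $\issueact$ via a chain lying entirely in $\beta$, for otherwise appending $e \to_{\hbo^{1}} \commitact$ would witness exactly the forbidden hb-through chain. Hence I can process events of $\beta$ in an order consistent with the $\hbo^{1}$-dependency, first relocating each blocking event $e$ (those with $e \to_{\hbo^{1}} \commitact$) past $\commitact$ to the right—which is legal since such $e$'s are not $\hbo^{1}$-reached by the events to their right in $\beta$—and then swapping $\commitact$ past the remaining, non-blocking events. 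Iterating over all transactions produces a trace equivalent to $\tau$ in which every issue is immediately followed by its commit, i.e., a serializable trace. Theorem~\ref{th:acyclicity} then yields the desired contradiction.

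The main obstacle is the careful bookkeeping of invariants during this iteration: verifying that each intermediate sliding step does not create a new $\hbo^{1}$-chain from $\issueact$ to $\commitact$, and that rearrangements performed for one transaction do not disturb the adjacency already achieved for previously processed transactions. I expect this to reduce to the fact that non-$\hbo^{1}$ swaps preserve $\hbo^{1}$-reachability between pairs of events, which follows by a straightforward induction on the number of swaps applied.
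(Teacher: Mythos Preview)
Your overall strategy---contrapositive, Mazurkiewicz-style swaps, then invoke Theorem~\ref{th:acyclicity}---is exactly the argument the paper sketches in the paragraph preceding the lemma. The paper gives no more detail than ``otherwise, we can build another trace with the same happens-before where events are reordered such that every issue is immediately followed by the corresponding commit,'' so you are fleshing out the intended proof, not departing from it.

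However, the concrete swapping mechanism you propose has a directional error. You define a blocking event as one with $e \to_{\hbo^{1}} \commitact(\apr,\atr)$ and then say you will ``relocate each blocking event $e$ past $\commitact$ to the right.'' The last step of that relocation is the swap $e \cdot \commitact \leadsto \commitact \cdot e$, which by your own rule requires $e \not\to_{\hbo^{1}} \commitact$---precisely the opposite of the blocking condition. Your justification (``such $e$'s are not $\hbo^{1}$-reached by the events to their right in $\beta$'') is also the wrong direction: to push $e$ rightward you need that $e$ does not $\hbo^{1}$-reach the events to its right, not the converse.

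The fix is to push in the other direction. Partition $\beta$ into $B$ (events reachable from $\issueact(\apr,\atr)$ by $\hbo^{1}$-chains inside $\beta$) and $A=\beta\setminus B$. The contrapositive hypothesis gives that no $e\in B$ has $e\to_{\hbo^{1}}\commitact$; by definition of $B$, there is no $\hbo^{1}$-edge from $B$ into $A$, and $\issueact$ has no $\hbo^{1}$-edge into $A$. Hence you can (i) reorder $\beta$ to the form $A'\cdot B'$ by legal swaps, (ii) slide $\issueact$ rightward past $A'$, and (iii) slide $\commitact$ leftward past $B'$, obtaining $\alpha\cdot A'\cdot \issueact\cdot\commitact\cdot B'\cdot\gamma$. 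With this correction the rest of your plan (iterating over transactions, invariants preserved because non-$\hbo^{1}$ swaps preserve all trace relations) goes through; one convenient bookkeeping device for the iteration is to treat each already-adjacent pair $\issueact\cdot\commitact$ as a single atomic macro-event, which your swap rule respects because of the $\sametro$ edge.
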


Given an anomaly of the from $\tau = \alpha \cdot \issueact(\apr,\atr) \cdot \beta \cdot \commitact(\apr,\atr) \cdot \gamma$, we call $\atr$ a \emph{delayed} transaction in the trace $\tau$ when $\issueact(\apr,\atr)$ happens before $\commitact(\apr,\atr)$ through $\beta$.

For an anomaly $\tau$, the \emph{number of delays}, denoted by $\#(\tau)$, in $\tau$ is the total number of \emph{delayed} transactions in the trace.
$$\mathtt{\#(\tau) = \#_{\atr\ is\ a\ \emph{delayed}\ transaction\ in\ \tau}\ \atr}$$

\begin{definition}[Minimal anomaly]
An anomaly $\tau$ is called \emph{minimal} if it has the least number of delays among all possible anomalies (for a given program $\aprog$).
\end{definition}

Given an anomaly $\tau = \alpha \cdot \issueact(\apr,\atr) \cdot \beta \cdot \commitact(\apr,\atr) \cdot \gamma$,  and assuming that $\atr$ is the \emph{first} delayed transaction in $\tau$ (w.r.t. the order between issue events of delayed transactions) and that $\tau$ is a minimal anomaly, the following lemma shows that we can assume w.l.o.g. that $\gamma$ is empty.

\begin{lemma}\label{lemma:ViolForm2}
Let $\tau = \alpha \cdot \issueact(\apr,\atr) \cdot \beta \cdot \commitact(\apr,\atr) \cdot \gamma$ be a minimal anomaly such that $\issueact(\apr,\atr)$ happens-before $\commitact(\apr,\atr)$ through $\beta$. Then, $\tau' = \alpha \cdot \issueact(\apr,\atr) \cdot \beta \cdot \commitact(\apr,\atr)$ is also a minimal anomaly.
\end{lemma}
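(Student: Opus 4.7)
The plan is to prove three facts in sequence: (i) $\tau'$ is a trace realized by some \sic{} execution of $\aprog$, (ii) $\tau'$ is still an anomaly (its transactional happens-before is cyclic), and (iii) the number of delayed transactions of $\tau'$ coincides with that of $\tau$, so minimality is inherited. Once these three facts are in hand, the lemma follows immediately.

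For (i), I would fix an execution $\rho \in \executionsconf_{\sic{}}(\aprog)$ with $\traceof{\rho} = \tau$ and take $\rho'$ to be the prefix of $\rho$ ending with the $\commitact(\apr,\atr)$ transition. Prefixes of runs of the labelled transition system are again runs, so $\rho' \in \executionsconf_{\sic{}}(\aprog)$; then I would check that the summary construction commutes with taking this prefix, so that $\traceof{\rho'} = \tau'$. Hanging transactions whose transitions started in $\rho'$ but whose $\commitact$ sits in the discarded suffix simply give rise to macro $\issueact$ events without a matching $\commitact$, which is permitted by the trace definition. For (ii), recall that each base relation among $\po,\rfo,\sto,\cfo,\sametro$ on a pair of events depends only on the events preceding them in the trace, so restricting to the prefix $\tau'$ preserves these relations between any two events still present. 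Hence the $\hbo^{1}$-chain $\issueact(\apr,\atr) = c_0 \to c_1 \to \cdots \to c_n \to c_{n+1} = \commitact(\apr,\atr)$ through $\beta$ furnished by the hypothesis survives in $\tau'$. Letting $\atr_i$ be the transaction carrying $c_i$ (each distinct from $\atr$, since $c_i$ lies strictly inside $\beta$ and $\atr$ has no other events there), this chain yields $\atr \,\hbo_t\, \atr_1 \,\hbo_t\, \cdots \,\hbo_t\, \atr_n \,\hbo_t\, \atr$, a cycle in $\hbo_t$, and Theorem~\ref{th:acyclicity} gives that $\tau'$ is not serializable.

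For (iii), any transaction $\atr''$ delayed in $\tau'$ has both its $\issueact$ and $\commitact$ present in $\tau'$, hence in $\tau$, and the $\hbo^{1}$-chain witnessing its delay lies entirely in $\tau' \subseteq \tau$; therefore $\atr''$ is also delayed in $\tau$, giving $\#(\tau') \le \#(\tau)$. On the other hand $\tau'$ is an anomaly of $\aprog$ by (i)+(ii), so minimality of $\tau$ forces $\#(\tau) \le \#(\tau')$. The two inequalities together yield $\#(\tau') = \#(\tau)$, making $\tau'$ a minimal anomaly. The main subtlety is step (i): the summary is defined by collapsing per-transaction sub-sequences and then adding dependency relations, so one must verify that both operations behave well under prefix, and in particular that the dependency relations on events in $\tau'$ derived from $\rho'$ coincide with the restrictions to $\tau'$ of those derived from $\rho$. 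This is routine once one unfolds the definitions, but it is where the care is required.
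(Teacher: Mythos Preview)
Your proposal is correct and rests on the same core observation as the paper: the happens-before cycle is already present once $\commitact(\apr,\atr)$ occurs, so truncating $\gamma$ preserves non-serializability. The paper's own proof is a single sentence to this effect and leaves implicit both that the prefix is a bona fide \sic{} trace and that minimality is inherited; your points (i) and (iii) spell these out, which is a welcome addition rather than a different route.
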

\begin{proof}
We can notice that after executing the event $\commitact(\apr,\atr)$, we obtain a cycle in the $\hbo_t$ relation. Thus, $\tau'$ is already an anomaly not serializable.
\end{proof}

The following result relates \sic{} robustness problem to finding certain anomaly trace which is a minimal anomaly where only a single transaction is delayed.

\begin{theorem} \label{theorem:SIMinViol}
A program $\aprog$ is not robust under \sic{} iff there exists an anomaly $\tau$ under \sic{} such that the following must hold:
 $$\tau=\alpha\cdot \issueact(\apr,\atr) \cdot\beta\cdot \commitact(\apr,\atr)\mbox{ where:}$$
\begin{enumerate}[label=(\alph*)]
\item $\issueact(\apr,\atr)$ is the issue of the only delayed transaction in $\tau$; (Lemmas \ref{lemma:SIMinForm} and \ref{lemma:SIMinViol});
\item $\issueact(\apr,\atr)$ happens before $\commitact(\apr,\atr)$ through $\beta$ (Lemma \ref{lemma:SIMinForm});
\item for any event $a\in\beta$, we have that $(\issueact(\apr,\atr),a) \in \hbo$ and $(a,\commitact(\apr,\atr)) \in \hbo$ (Lemma \ref{lemma:SIMinForm});
\item there exist events $a$ and $b$ in $\beta$ such that $(\issueact(\apr,\atr),a) \in \cfo(x)$ and $(b,\commitact(\apr,\atr)) \in  \cfo(y)$ with $x \neq y$ (Lemma \ref{lemma:SIMinForm});
\item  all delayed transactions in $\beta$ don't write to shared variables that $\atr$ writes to (Lemma \ref{lemma:ViolForm-1}).
\end{enumerate}
\end{theorem}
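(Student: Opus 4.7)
The proof is an iff; I would treat each direction in turn. The ``if'' direction is short: any trace satisfying (b)--(d) exhibits a happens-before cycle on $\atr$ (via the $\beta$-chain from $\issueact(\apr,\atr)$ to $\commitact(\apr,\atr)$), so it is not serializable by Theorem~\ref{th:acyclicity}, yet by construction it belongs to $\tracesconf_{\sic{}}(\aprog)$, witnessing non-robustness.

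For the ``only if'' direction, I would assume $\aprog$ is not robust and pick any minimal anomaly $\tau$ (minimal in the number $\#(\tau)$ of delayed transactions). First I would apply Lemma~\ref{lemma:ViolForm1} to decompose $\tau = \alpha\cdot\issueact(\apr,\atr)\cdot\beta\cdot\commitact(\apr,\atr)\cdot\gamma$ with $\issueact(\apr,\atr)$ happening-before $\commitact(\apr,\atr)$ through $\beta$, and then Lemma~\ref{lemma:ViolForm2} to drop the suffix $\gamma$. This fixes the overall shape of the trace and yields condition (b).

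The main obstacle is pushing the minimality argument to its sharpest form, condition (a), so that $\atr$ is the unique delayed transaction. My plan is to argue by contradiction: suppose some $\atr' \neq \atr$ is also delayed. Then I would construct a shorter-delay execution in $\executionsconf_{\sic{}}(\aprog)$ by ``collapsing'' $\atr'$, i.e., moving $\commitact(\apr',\atr')$ immediately after $\issueact(\apr',\atr')$. Feasibility under SI must be verified using the rules of Figure~\ref{Table:MPRules}: since $\atr'$ was successfully committed in $\tau$, no transaction concurrent with it wrote to a variable it writes to, and the proposed transformation only tightens this condition. The delicate part is preserving the happens-before chain witnessing the cycle on $\atr$; if collapsing $\atr'$ broke that chain, I would recover an alternative anomaly using $\atr'$ itself as the delayed transaction and iterate. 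This is the content I expect to package as Lemmas~\ref{lemma:SIMinForm} and~\ref{lemma:SIMinViol}.

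Finally, conditions (c), (d), and (e) should fall out from cleaning up $\beta$ and exploiting SI. For (c), any event in $\beta$ not on a happens-before path between $\issueact(\apr,\atr)$ and $\commitact(\apr,\atr)$ commutes with its neighbors and can be moved past $\commitact(\apr,\atr)$, hence removed from $\beta$ without destroying the cycle. For (d), the only $\hbo^{1}$ edges (other than $\sametro$) leaving $\issueact(\apr,\atr)$ are $\cfo$ edges on variables $\atr$ reads, and the only $\hbo^{1}$ edges entering $\commitact(\apr,\atr)$ are $\cfo$ edges on variables $\atr$ writes; the two variables must differ, because otherwise SI would have aborted one of the two concurrent writers. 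The same observation yields (e), which I would formalize as Lemma~\ref{lemma:ViolForm-1}.
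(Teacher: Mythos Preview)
Your ``if'' direction and your treatment of conditions (c), (d), (e) are essentially what the paper does; in particular your analysis of the $\hbo^{1}$ edges incident to $\issueact(\apr,\atr)$ and $\commitact(\apr,\atr)$ matches Lemma~\ref{lemma:SIMinForm}, and your argument for (e) is exactly Lemma~\ref{lemma:ViolForm-1}.

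The gap is in your argument for (a). Your primary move is to ``collapse'' the second delayed transaction $\atr'$, i.e., to slide $\commitact(\apr',\atr')$ to immediately after $\issueact(\apr',\atr')$. But this is precisely what being \emph{delayed} forbids: by definition there is an event $e$ strictly between the two with $(\issueact(\apr',\atr'),e)\in\hbo$ and $(e,\commitact(\apr',\atr'))\in\hbo$, so $\commitact(\apr',\atr')$ cannot be swapped leftwards past $e$. Your feasibility check (the write-write conflict test in Figure~\ref{Table:MPRules}) only addresses whether $\atr'$ may still commit; it does not address the fact that committing $\atr'$ earlier exposes its writes to every transaction whose snapshot was taken in the interval, which in general changes their reads and hence the whole execution. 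So the ``collapsed'' object is neither a reordering of $\tau$ nor obviously an execution of $\aprog$ at all.

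The paper's Lemma~\ref{lemma:SIMinViol} takes the opposite route: it leaves $\atr'$ (called $\atr_0$ there) untouched and instead \emph{removes} $\atr$ entirely. This is safe because $\commitact(\apr,\atr)$ is the last event of $\tau$, so no transaction in $\beta$ ever observed a write of $\atr$, and $\apr$ issues nothing after $\atr$; dropping both $\issueact(\apr,\atr)$ and $\commitact(\apr,\atr)$ therefore yields a valid \sic{} execution. The happens-before chain witnessing that $\atr_0$ is delayed lies entirely inside $\beta$ and hence survives the deletion, producing an anomaly with strictly fewer delays and contradicting minimality. Your fallback (``use $\atr'$ itself as the delayed transaction and iterate'') gestures at this, but without the deletion of $\atr$ you have no measure that decreases, and truncating at $\commitact(\apr',\atr')$ leaves a dangling $\issueact(\apr,\atr)$ with no matching commit. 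The missing idea is simply: delete the outer delayed transaction, not the inner one.
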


Note that in certain cases the events $a$ and $b$ can be identical and $\beta = a$.
Figure~\ref{fig:sibadtrace} shows two examples of anomalies of the form given in Theorem \ref{theorem:SIMinViol}.
\begin{figure}[t]
\begin{minipage}[c]{0.43\textwidth}
\begin{subfigure}{\linewidth}
\scalebox{0.59}
{
\begin{tikzpicture}

 \node[shape=rectangle ,draw=none,font=\large] (A) at (0,0)  [] {$\issueact(\apr 1,\atr 1)$};
  \node[shape=rectangle ,draw=none,font=\large] (B) at (2.5,0)  [] {$(\apr 2,\atr 2)$};
  \node[shape=rectangle ,draw=none,font=\large] (C) at (5,0)  [] {$\storeact(\apr 1,\atr 1)$};

  \begin{scope}[ every edge/.style={draw=red,very thick}]
  \path [->] (A) edge [bend right] node [above,font=\large] {$RW$} (B);
  \path [->] (B) edge [bend right] node [above,font=\large] {$RW$} (C);
  \end{scope}

\end{tikzpicture}}
\caption{Anomaly trace of WS program.}
\label{fig:rob0tracesi}
\end{subfigure}
\end{minipage}
\hfill
\begin{minipage}[c]{0.47\textwidth}
\begin{subfigure}{\linewidth}
\scalebox{0.59}
{\begin{tikzpicture}

 \node[shape=rectangle ,draw=none,font=\large]  (A) at (0,0)  [] {$\issueact(\apr 3,\atr 3)$};
  \node[shape=rectangle ,draw=none,font=\large] (B) at (2.5,0)  [] {$(\apr 1,\atr 1)$};
 \node[shape=rectangle ,draw=none,font=\large]  (C) at (5,0)  [] {$(\apr 2,\atr 2)$};
  \node[shape=rectangle ,draw=none,font=\large] (D) at (7.5,0)  [] {$\storeact(\apr 3,\atr 3)$};

  \begin{scope}[ every edge/.style={draw=red,very thick}]
  \path [->,] (A) edge [bend left] node [above,font=\large] {$RW$} (B);
  \path [->] (B) edge  [bend left] node [above,font=\large] {$WR$} (C);
  \path [->] (C) edge  [bend left] node [above,font=\large] {$RW$} (D);
  \end{scope}
\end{tikzpicture}}
\caption{Anomaly trace of RWC program.}
\label{fig:rob1minitrace}
\end{subfigure}
\end{minipage}
\vspace{-2mm}
\caption{(a) Corresponds to an anomaly pattern where $\beta=(\apr 2,\atr 2)$, $\atr$ corresponds to $\atr 1$, and $a=b=(\apr 2,\atr 2)$. Also (b) Corresponds to an anomaly pattern where  $\beta=(\apr 1,\atr 1) \cdot (\apr 2,\atr 2)$, $\atr$ corresponds to $\atr 1$, and $a$ and $b$ correspond to $(\apr 1,\atr 1)$ and $(\apr 2,\atr 2)$, respectively.}
\label{fig:sibadtrace}
\end{figure}

In the following, we give the lemmas that constitute Theorem \ref{theorem:SIMinViol}.


An important property in \sic{} semantics is that of conflict-free, and since the event $\commitact(\apr, \atr)$ is successfully executed only if there are no concurrent writes that were committed after $\issueact(\apr, \atr)$. Thus, for every event $\commitact(\apr_0,\atr_0)$ in $\beta$, $\commitact(\apr_0,\atr_0)$ don't write to a shared variable that $\commitact(\apr,\atr)$ writes to.

\begin{lemma}\label{lemma:ViolForm-1}
Let $\tau = \alpha \cdot \issueact(\apr, \atr)\cdot \beta\cdot \commitact(\apr, \atr)$ be a minimal anomaly such that $\issueact(\apr,\atr)$ happens-before $\commitact(\apr,\atr)$ through $\beta$. Then, for every  $a \in \beta$, $a$ does not write to a shared variable that $\commitact(\apr,\atr)$ writes to.
\end{lemma}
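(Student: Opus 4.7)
The plan is a proof by contradiction using the snapshot-isolation commit rule from Figure~\ref{Table:MPRules}. Suppose some $a \in \beta$ writes to a shared variable $x$ that $\atr$ also writes to, and let $\atr' \neq \atr$ be the transaction to which $a$ belongs, so that $\atr'$ contains an internal write to $x$. The key ingredient is the commit precondition $\timest(x) < \atr.\asti$ for every $x$ written by $\atr$: it forbids any transaction whose commit is stamped after $\atr.\asti$ from having updated $x$ before $\atr$'s commit fires.

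I would then proceed by case analysis on the position of $\commitact(\apr',\atr')$ in the underlying execution. If it lies in $\alpha$, then $\issueact(\apr',\atr')$ also lies in $\alpha$ since issues precede commits of the same transaction, contradicting $a \in \beta$. If $\commitact(\apr',\atr')$ lies in $\beta$, then $\atr.\asti < \atr'.\acti < \atr.\acti$; since $\atr'$ writes to $x$, the commit transition for $\atr'$ updates the global timestamp so that $\timest(x) \geq \atr'.\acti > \atr.\asti$ by the time $\atr$ attempts to commit, violating $\atr$'s commit precondition and contradicting the presence of $\commitact(\apr,\atr)$ at the end of $\tau$. If $\commitact(\apr',\atr')$ occurs after $\commitact(\apr,\atr)$ in the execution, then $\atr'.\asti < \atr.\acti$ because $\issueact(\apr',\atr') \in \beta$; after $\atr$ commits, $\timest(x) = \atr.\acti > \atr'.\asti$, violating the precondition of $\atr'$'s own commit, so $\atr'$ would abort and this case cannot arise in any valid execution either.

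The main obstacle is the remaining possibility that $\atr'$ never commits at all, leaving only $\issueact(\apr',\atr')$ in $\beta$. The plan here is to exploit the minimality of $\tau$ and remove $\atr'$. Because $\atr'$ is uncommitted, its issue event cannot be the source of any $\rfo$, $\sto$, or $\sametro$ edge, and since SI executes each process sequentially $\apr'$ cannot have started any later transaction while $\atr'$ is still pending, ruling out $\po$ outgoing edges as well; thus the only happens-before edges leaving $\issueact(\apr',\atr')$ are $\cfo$ edges. The harder technical step is then to show that $\issueact(\apr',\atr')$ can be deleted from $\tau$ (or shifted past $\commitact(\apr,\atr)$) while preserving both the validity of the execution and the happens-before chain from $\issueact(\apr,\atr)$ to $\commitact(\apr,\atr)$, by short-circuiting each outgoing $\cfo$ edge through the predecessor of $\issueact(\apr',\atr')$ on the chain. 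Obtaining a strictly smaller anomaly this way contradicts minimality and closes the proof.
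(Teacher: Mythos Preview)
Your case analysis up through case 2 is exactly the paper's argument, which is the one-sentence appeal to the SI commit precondition recorded just before the lemma: since $\commitact(\apr,\atr)$ fires, every transaction that committed between $\atr$'s start and $\atr$'s commit must be write-disjoint from $\atr$. That covers every commit event in $\beta$, and hence also every issue event whose matching commit lies in $\beta$. Your case 1 is trivially excluded, and your case 3 is really about an execution extending beyond $\tau$; since the lemma already has $\tau$ ending at $\commitact(\apr,\atr)$, within $\tau$ that situation collapses into case 4.

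The genuine gap is in your case 4. Minimality in this paper is measured by the \emph{number of delayed transactions} (the quantity $\#(\tau)$ defined just before the notion of minimal anomaly), and a transaction whose commit never appears in $\tau$ is not delayed under that definition. So even if you could excise $\issueact(\apr',\atr')$ and still have an anomaly, you would not have strictly fewer delays, and minimality yields no contradiction. Independently, the ``short-circuiting'' step is unsupported: an incoming $\po$ or $\rfo$ edge into $\issueact(\apr',\atr')$ and an outgoing $\cfo$ edge out of it need not compose into any single $\hbo^{1}$ edge between the neighbours, so deleting the node can simply break the chain through $\beta$ and leave you with no anomaly at all. The paper's own justification does not engage with this dangling-issue case either; it argues only about commit events in $\beta$ (the sentence preceding the lemma) and leaves the extension to uncommitted issues implicit.
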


The following lemma shows that we must have an event $\issueact(\apr',\atr') \in \beta$ such that $\issueact(\apr',\atr')$ happens before $\commitact(\apr, \atr)$.

\begin{lemma}\label{lemma:ViolForm0}
Let $\tau = \alpha \cdot \issueact(\apr, \atr)\cdot \beta\cdot \commitact(\apr, \atr)$ be a minimal anomaly such that $\issueact(\apr,\atr)$ happens-before $\commitact(\apr,\atr)$ through $\beta$. Then, there must exist an event $\issueact(\apr',\atr') \in \beta$ such that $\issueact(\apr',\atr')$ happens before $\commitact(\apr, \atr)$, i.e., $(\issueact(\apr',\atr'),\commitact(\apr, \atr))\in \hbo$.
\end{lemma}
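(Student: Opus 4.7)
The plan is to unpack Definition~\ref{Definition:HBThrough} and inspect the last link of the witnessing chain. By hypothesis there is a non-empty subsequence $c_1, \ldots, c_n$ of $\beta$ with $\issueact(\apr, \atr) \to_{\hbo^{1}} c_1 \to_{\hbo^{1}} \cdots \to_{\hbo^{1}} c_n \to_{\hbo^{1}} \commitact(\apr, \atr)$. Every $c_i$ is a trace macro-event, so either the issue or the commit of some transaction strictly between $\issueact(\apr,\atr)$ and $\commitact(\apr,\atr)$ in $\tau$, and hence distinct from $\atr$ itself. My candidate witness is $c_n$: it already lies in $\beta$ and by construction $(c_n, \commitact(\apr, \atr)) \in \hbo$. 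So it suffices to show that $c_n$ is an issue event.

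The argument is a short case analysis on which of $\po$, $\rfo$, $\sto$, $\cfo$, $\sametro$ realises the final step $c_n \to_{\hbo^{1}} \commitact(\apr, \atr)$. Three of the five are ruled out immediately by typing against the definitions of these relations: $\po$ targets only issue events, $\rfo$ targets only issue events, and $\sametro$ is restricted to the issue and commit of the same transaction, which would force $c_n = \issueact(\apr,\atr)$ and contradict $c_n\in\beta$. The delicate case is $\sto$. If the last link were an $\sto$-edge, then $c_n$ would be a commit event $\commitact(\apr',\atr')$ such that $\atr'$ and $\atr$ both write to a common shared variable $\anaddr$. But this is exactly what Lemma~\ref{lemma:ViolForm-1} prohibits: for any event $a\in\beta$, the transaction of $a$ must not write to any variable that $\commitact(\apr,\atr)$ writes to. Hence $\sto$ cannot be the last step either.

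The only remaining possibility is $\cfo$, whose source is by definition an issue event; writing $c_n = \issueact(\apr',\atr')$ yields the required witness, since then $\issueact(\apr',\atr')\in\beta$ and $(\issueact(\apr',\atr'),\commitact(\apr,\atr))\in\cfo\subseteq\hbo$. The main obstacle, and really the only non-routine ingredient, is the exclusion of the $\sto$ case, which relies crucially on Lemma~\ref{lemma:ViolForm-1} (itself a consequence of the write--write conflict freedom enforced by \sic{} together with the minimality of $\tau$); once that lemma is available, the rest reduces to a typing check against the five relations defining $\hbo^{1}$.
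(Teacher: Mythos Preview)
Your proof is correct and takes a genuinely different route from the paper's. The paper argues by contradiction: assuming no issue event in $\beta$ is $\hbo$-related to $\commitact(\apr,\atr)$, it claims that $\commitact(\apr,\atr)$ can then be swapped left past every event in $\beta$ until it becomes adjacent to $\issueact(\apr,\atr)$, so that $\atr$ would no longer be delayed, contradicting the hypothesis. Your argument, by contrast, is direct: you pick the last element $c_n$ of the witnessing chain from Definition~\ref{Definition:HBThrough} and run a case split on the five relations composing $\hbo^{1}$ to show that the edge $c_n\to_{\hbo^{1}}\commitact(\apr,\atr)$ can only be a $\cfo$ edge, hence $c_n$ is an issue event.

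What each approach buys: the paper's swapping argument is more ``semantic'' (reordering preserves the trace), but as written it is terse about why commit events in $\beta$ pose no obstacle to the swap. Your approach is more syntactic and self-contained: each case is discharged by the typing of the five relations, with the single non-trivial case ($\sto$) handled by Lemma~\ref{lemma:ViolForm-1}, which precedes this lemma in the paper and is therefore legitimately available. Your argument also yields a slightly sharper conclusion---namely that the final edge is specifically a $\cfo$ edge---which is exactly the information exploited later in Lemma~\ref{lemma:SIMinForm}.
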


\begin{proof}
Suppose by contradiction that $\beta$ does not contain an event $\issueact(\apr',\atr')$ such that $\issueact(\apr',\atr')$ happens before $\commitact(\apr, \atr)$. Then, $\commitact(\apr,\atr)$ can be swapped with every $\issueact(\apr',\atr')$ event in $\beta$. Thus, we obtain that the two events $\issueact(\apr,\atr)$ and $\commitact(\apr,\atr)$ are adjacent which means that $\atr$ is no longer a delayed  transaction which is a contradiction. Therefore, $\beta$ does contain an event $\issueact(\apr',\atr')$ such that  $(\issueact(\apr',\atr'),\commitact(\apr, \atr))\in \hbo$.
\end{proof}

Next lemma shows that we can always obtain a minimal anomaly trace $\tau=\alpha\cdot \issueact(\apr,\atr) \cdot\beta\cdot \commitact(\apr,\atr)$ where $\beta$ contains no delayed transaction.
We show that if it were to have a delayed transaction $\atr_0$ in $\beta$, then it is possible to obtain a new anomaly where either $\atr$ is not delayed or $\atr_0$ is not delayed, and obtain a new anomaly with a smaller number of delayed transactions which contradicts the minimality assumption.

\begin{lemma}\label{lemma:SIMinViol}
Let $\tau=\alpha\cdot \issueact(\apr,\atr) \cdot\beta\cdot \commitact(\apr,\atr)$ be a minimal anomaly. Then, $\beta$ does not contain delayed transactions.
\end{lemma}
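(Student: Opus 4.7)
The plan is to proceed by contradiction: suppose $\tau = \alpha \cdot \issueact(\apr,\atr) \cdot \beta \cdot \commitact(\apr,\atr)$ is a minimal anomaly yet $\beta$ contains some delayed transaction $\atr_0$; since $\issueact(\apr,\atr)$ lies outside $\beta$ we have $\atr_0 \neq \atr$, and since $\tau$ ends at $\commitact(\apr,\atr)$, both $\issueact(\apr_0,\atr_0)$ and $\commitact(\apr_0,\atr_0)$ must in fact lie inside $\beta$. I will construct a valid \sic{} execution $\tau''$ that is still an anomaly but has strictly fewer delayed transactions than $\tau$, contradicting minimality.

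Write $\beta = \beta_1 \cdot \issueact(\apr_0,\atr_0) \cdot \beta_2 \cdot \commitact(\apr_0,\atr_0) \cdot \beta_3$; by definition of delayed, $\issueact(\apr_0,\atr_0)$ happens-before $\commitact(\apr_0,\atr_0)$ through $\beta_2$. The candidate trace is the prefix
\[
\tau'' \;=\; \alpha \cdot \issueact(\apr,\atr) \cdot \beta_1 \cdot \issueact(\apr_0,\atr_0) \cdot \beta_2 \cdot \commitact(\apr_0,\atr_0),
\]
which is a valid \sic{} execution because prefixes of runs of $[\aprog]_{\sic{}}$ remain runs.

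Next I would argue $\tau''$ is an anomaly. Unfolding the happens-before-through witness yields a non-empty sub-sequence $c_1\cdots c_n$ of $\beta_2$ with $\issueact(\apr_0,\atr_0) \rightarrow_{\hbo^1} c_1 \rightarrow_{\hbo^1} \cdots \rightarrow_{\hbo^1} c_n \rightarrow_{\hbo^1} \commitact(\apr_0,\atr_0)$. In particular $\issueact(\apr_0,\atr_0) \hbo c_1$ and $c_1 \hbo \commitact(\apr_0,\atr_0)$. The only trace-level events of $\atr_0$ are its issue and commit macro-events, neither of which lies in $\beta_2$, so $c_1$ belongs to some transaction $\atr_1 \neq \atr_0$. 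At the transactional level this gives $\atr_0 \hbo_t \atr_1 \hbo_t \atr_0$, a cycle, and by Theorem~\ref{th:acyclicity} $\tau''$ is not serializable.

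Finally I would compare delay counts. Every transaction counted in $\#(\tau'')$ has both its issue and commit inside the prefix $\tau''$, and because the events strictly between them coincide in $\tau''$ and in $\tau$, the same transaction is also counted in $\#(\tau)$. Conversely $\atr$ is counted in $\#(\tau)$ by assumption but its commit $\commitact(\apr,\atr)$ is absent from $\tau''$, so $\atr$ is not counted in $\#(\tau'')$. Therefore $\#(\tau'') \leq \#(\tau) - 1 < \#(\tau)$, contradicting the minimality of $\tau$. The main obstacle is the middle step: one must check that the $\hbo^1$-chain witnessing the delay of $\atr_0$ lives entirely within $\beta_2$, so that it survives the truncation intact, and that the resulting transactional cycle involves a transaction distinct from $\atr_0$; both facts follow from $\atr_0$'s only trace events being its issue and commit, neither of which sits in $\beta_2$.
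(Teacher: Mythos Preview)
Your argument is correct and the overall strategy (contradiction via a smaller anomaly built around the inner delayed transaction $\atr_0$) coincides with the paper's. The construction of the smaller anomaly, however, is genuinely different.

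The paper \emph{removes} $\atr$ entirely, producing $\tau'=\alpha\cdot\beta_1\cdot\issueact(\apr_0,\atr_0)\cdot\beta'\cdot\commitact(\apr_0,\atr_0)\cdot\beta_2$, and must justify that this is still an \sic{} execution of $\aprog$; the justification appeals to the semantic fact that under \sic{} no transaction in $\beta$ can depend on the uncommitted writes of $\atr$, so excising $\atr$ leaves all other reads unchanged. You instead \emph{truncate} to a prefix $\tau''$ ending at $\commitact(\apr_0,\atr_0)$, which sidesteps that argument entirely since prefix-closure of the LTS $[\aprog]_{\sic{}}$ is immediate. In exchange you must check that the $\hbo^1$-chain witnessing the delay of $\atr_0$ survives truncation; this holds because all relations $\po,\rfo,\sto,\cfo,\sametro$ between events of the prefix are determined by the prefix alone.

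Both constructions drop $\atr$ from the delay count and keep $\atr_0$, yielding the same strict inequality $\#(\cdot)<\#(\tau)$. Your route is slightly more elementary (no appeal to the invisibility of uncommitted writes), while the paper's route retains the tail $\beta_2$---not needed here, but closer in shape to the other lemmas in the development. Note that your claim ``$\atr$ is counted in $\#(\tau)$ by assumption'' relies on the implicit context (established before Lemma~\ref{lemma:ViolForm2}) that $\atr$ is the first delayed transaction; the paper makes the same implicit use.
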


\begin{proof}
We suppose by contradiction that $\beta$ contains a delayed transaction $\atr_0$ issued by a process $\apr_0$.

It is important to notice that there must exist $\beta'\subset \beta$ and $\commitact(\apr_0,\atr_0) \in \beta$ such that $\issueact(\apr_0,\atr_0)$ happens before $\commitact(\apr_0,\atr_0)$ through $\beta$. Otherwise, we can commute the events until $\commitact(\apr_0,\atr_0)$ occurs just after $\issueact(\apr_0,\atr_0)$ and in this case
transaction $\atr_0$ is not delayed by $\apr_0$. Thus, $\tau$ is of the form
$\tau=\alpha\cdot \issueact(\apr,\atr) \cdot\beta_1\cdot\issueact(\apr_0,\atr_0)\cdot\beta'\cdot\commitact(\apr_0,\atr_0)\cdot\beta_2\cdot \commitact(\apr,\atr)$.

Notice that we can build happens-before cycle from $\issueact(\apr_0,\atr_0)$ to $\commitact(\apr_0,\atr_0)$. Also, since in $\beta_1\cdot\issueact(\apr_0,\atr_0)\cdot\beta'\cdot\commitact(\apr_0,\atr_0)\cdot\beta_2$ no transaction depends on $\issueact(\apr,\atr)$.
Thus, we can safely remove $\issueact(\apr,\atr)$ and its associated commit event $\commitact(\apr,\atr)$ and obtains:
$\tau'=\alpha\cdot\beta_1\cdot\issueact(\apr_0,\atr_0)\cdot\beta'\cdot\commitact(\apr_0,\atr_0)\cdot\beta_2$.
which is an anomaly because of the cycle formed as $\issueact(\apr_0,\atr_0)$ happens-before $\commitact(\apr_0,\atr_0)$ through $\beta'$.
In $\tau'$, transaction $\atr$ was not delayed, therefore, $\tau'$ has less number of delays than $\tau$. Thus, $\tau$ is not a minimal anomaly, a contradiction to our hypothesis.

\end{proof}

The next lemma characterizes the relation between the first delayed transaction and the commit of the underlying transaction. It shows the type of the first and last happens-before relations in the happens-before path between the issue of the only delayed transaction and its corresponding commit.

\begin{lemma}\label{lemma:SIMinForm}
Let $\tau = \alpha  \cdot \issueact(\apr,\atr) \cdot \beta  \cdot \commitact(\apr,\atr)$ be a minimal anomaly under \sic{}. Then, the following must hold:
There exist $(\apr_0,\atr_0),\ (\apr_1,\atr_1)\in\beta$ where
$(\issueact(\apr,\atr),(\apr_0,\atr_0)) \in \cfo$, and $((\apr_1,\atr_1),\commitact(\apr,\atr)) \in \cfo$.
\end{lemma}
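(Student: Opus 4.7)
By hypothesis, there is a chain of events $\issueact(\apr,\atr) = c_0, c_1, \ldots, c_n, c_{n+1} = \commitact(\apr,\atr)$ with $c_1,\ldots,c_n \in \beta$ and $(c_i, c_{i+1}) \in \hbo^1 = \po \cup \rfo \cup \sto \cup \cfo \cup \sametro$ for each $i$. The plan is a direct case analysis on the type of the first edge $(c_0, c_1)$ and the last edge $(c_n, c_{n+1})$, showing that both must be $\cfo$ edges. This immediately supplies the required transactions: by Lemma~\ref{lemma:SIMinViol} no transaction appearing in $\beta$ is delayed, so whenever an issue or commit event of some $\atr_i$ lies in $\beta$ the entire transaction $(\apr_i,\atr_i)$ does, justifying the statement's shorthand $(\apr_i,\atr_i)\in\beta$.

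For the first edge, out of $\issueact(\apr,\atr)$, I would rule out every primitive relation except $\cfo$. The relations $\rfo$ and $\sto$ have a commit event as source, so neither can originate from $\issueact(\apr,\atr)$. The $\sametro$ edge from $\issueact(\apr,\atr)$ only leads to $\commitact(\apr,\atr)$, which lies outside $\beta$. A $\po$ edge would require another issue event of process $\apr$ inside $\beta$, which is impossible under \sic{}: by the transition rules of $[\aprog]_{\sic{}}$, a fresh $\beginact(\apr,\cdot)$ may fire only after the preceding transaction of $\apr$ has committed, and $\commitact(\apr,\atr)$ occurs only at the very end of $\tau$, after $\beta$. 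Hence the first edge must be a $\cfo$ edge, necessarily to some commit event $\commitact(\apr_0,\atr_0)$ with $\atr_0$ in $\beta$, yielding $(\apr_0,\atr_0)$.

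Symmetrically, for the last edge into $\commitact(\apr,\atr)$, I would rule out every relation except $\cfo$. The relations $\po$ and $\rfo$ have an issue event as target, so neither can terminate at $\commitact(\apr,\atr)$. The only $\sametro$ edge into $\commitact(\apr,\atr)$ comes from $\issueact(\apr,\atr)$, which is outside $\beta$. A $\sto$ edge into $\commitact(\apr,\atr)$ would require some commit event in $\beta$ writing to a variable also written by $\atr$, which is precisely what Lemma~\ref{lemma:ViolForm-1} excludes. Therefore the last edge is a $\cfo$ edge from some issue event $\issueact(\apr_1,\atr_1)$ with $\atr_1$ in $\beta$, yielding $(\apr_1,\atr_1)$.

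The main obstacle I anticipate is the $\po$ case in the first step: it requires leveraging the concrete \sic{} transition rules (rather than some syntactic well-formedness condition) to argue that process $\apr$ has no events in $\beta$ between its issue and commit of $\atr$. Once that point is discharged, the remaining cases are straightforward consequences of the definitions of the primitive dependency relations and of Lemma~\ref{lemma:ViolForm-1}.
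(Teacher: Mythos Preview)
Your proposal is correct and follows essentially the same approach as the paper: a case analysis on the first and last edges of the happens-before chain through $\beta$, ruling out every primitive relation other than $\cfo$. Your argument is slightly more explicit---you exploit the source/target event types of $\rfo$, $\sto$, $\po$, $\sametro$ to discard cases, whereas the paper phrases the same eliminations more tersely via ``$(\apr,\atr)$ is not visible to any event in $\beta$'' (covering $\rfo$ and $\po$) and the \sic{} conflict-freedom of concurrent commits (covering $\sto$)---but the underlying reasoning is the same.
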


\begin{proof}
We have that since $\issueact(\apr,\atr)$ happens-before $\commitact(\apr,\atr)$ through $\beta$ and $\beta$ does not contain a delayed transaction.
Then, there must exist $(\apr_0,\atr_0)$ and $(\apr_1,\atr_1)$ in $\beta$ such that $(\issueact(\apr,\atr),(\apr_0,\atr_0)) \in \hbo^{1}$, $((\apr_1,\atr_1),\commitact(\apr,\atr)) \in \hbo^{1}$, and $((\apr_0,\atr_0),(\apr_1,\atr_1)) \in \hbo$. It is important to note that $(\apr_0,\atr_0)$ and $(\apr_1,\atr_1)$ might be identical in which case $\beta = (\apr_0,\atr_0)$.

Notice that every event in $\beta$ (including $(\apr_0,\atr_0)$ and $(\apr_1,\atr_1)$) cannot write to a variable that $(\apr,\atr)$ writes to under \sic{} semantics, thus store order relation is not possible.
Also, since $(\apr,\atr)$ is not visible to any event in $\beta$ thus the read-from and program order are not possible. Thus, the only possibility is that $(\issueact(\apr,\atr),(\apr_0,\atr_0)) \in \cfo$, and $((\apr_1,\atr_1),\commitact(\apr,\atr)) \in \cfo$.

\end{proof}

\section{Proofs of Section~\ref{sec:Instr}: The Complete Instrumentation}\label{sec:InstrAppendix}
In this section, we present the instrumentation for the remaining instructions which are, begin and commit for the attacker and happens-before helper.

\subsection{Instrumentation of the Attacker}\label{subsec:Attacker}
We provide in Figure \ref{Figure:AttackerAppendix}, the instrumentation of the code for the attacker process. When the attacker randomly chooses a transaction to delay, it sets the flag $a_{\attacktransaction}$ to $\mytrue$ in the instruction $\beginact$ (line~\eqref{Equation:AttackerTransaction}). Then, it sets the flag $p.a$ to $\ajoin$ to indicate that the current process is the attacker. It copies the values from every variable $x$ to its primed version $x'$.

In the case the attacker starts the happens-before chain, it has to set the variable $\hb{}$ to $\mytrue$ to mark the start of the happens-before chain and the end of the visibility chain and set the flag $x.event$ to $\loadacc$ (line~\eqref{Equation:AttackerLastLoad} in Figure \ref{Figure:Attacker}). We can notice that when the $\hb{}$ is set to $\mytrue$, we can no longer execute new transactions from the attacker (all conditions in lines~\eqref{Equation:AttackerBegin} and \eqref{Equation:AttackerBeginDelay} become $\myfalse$).

\begin{figure}[htbp]
 \scriptsize
 \begin{minipage}{0.49\linewidth}
\begin{eqnarray}
&&\semidler{\thetransition{\alab_1}{\alab_2}{\commitact}}=\notag\\
&&\textbf{// Typical execution of commit}\notag\\
&&\thetransitionInstrumented{\alab_1}{\alab_{x1}}{\thecondition{a_{\attacktransaction} = \perp \lor \ p.a \neq \perp}}\notag\\
&&\thetransitionInstrumented{\alab_{x1}}{\alab_{2}}{\commitact}\notag\\
&&\textbf{// Succeeded Commit of delayed transaction}\notag\\
&&\thetransitionInstrumented{\alab_1}{\alab_{x2}}{\thecondition{ a_{\attacktransaction} \neq \perp \land \ \hb{} \neq \perp}}\notag\\
&&\thetransitionInstrumented{\alab_{x2}}{\alab_2}{\commitact}\notag\\
&&\textbf{// Failed commit of delayed transaction}\notag\\
&&\thetransitionInstrumentedb{\alab_1}{\thecondition{a_{\attacktransaction} \neq \perp \land \ \hb{} = \perp}}\label{Equation:ABLOCK}
\end{eqnarray}
 \end{minipage}\hfill
 \begin{minipage}{0.49\linewidth}
 \begin{eqnarray}
&&\semidler{\thetransition{\alab_1}{\alab_2}{\beginact}} = \notag\\
&&\textbf{// Typical execution of begin}\notag\\
&&\thetransitionInstrumented{\alab_1}{\alab_{x1}}{\thecondition{\hb{} = \perp \land (p.a \neq \perp \lor  a_{\attacktransaction} = \perp)}}\ \ \ \ \ \  \label{Equation:AttackerBegin}\\
&&\thetransitionInstrumented{\alab_{x1}}{\alab_2}{\beginact}\notag\\
&&\textbf{// Begin of delayed transaction}\notag\\
&&\thetransitionInstrumented{\alab_1}{\alab_{x2}}{\thecondition{\hb{} = \perp \land \  a_{\attacktransaction} = \perp}}\label{Equation:AttackerBeginDelay}\\
&&\thetransitionInstrumented{\alab_{x2}}{\alab_{x3}}{\beginact}\notag\\
&&\thetransitionInstrumented{\alab_{x3}}{\alab_{x4}}{\theassign{p.a}{\ajoin}}\notag\\
&&\thetransitionInstrumented{\alab_{x4}}{\alab_{x5}}{\theforeachassign{\anaddr}{\mathbb{V}}{\anaddr'}{\anaddr}}\notag\\
&&\thetransitionInstrumented{\alab_{x5}}{\alab_2}{\theassign{a_{\attacktransaction}}{\mytrue}}\label{Equation:AttackerTransaction}
\end{eqnarray}
 \end{minipage}
\normalsize
\caption{Instrumentation of the Attacker}
\label{Figure:AttackerAppendix}
\end{figure}

\subsection{Instrumentation of the Happens-Before Helpers}\label{subsec:HBHelpers}
In Figure \ref{Figure:HBHelpersInstrumentationAppendix}, we provide the instrumentation of the remaining instructions of a happens-before helper.

When the flag $\hbo$ is set to $\mytrue$, a process (which cannot be the attacker, i.e., the flag $p.a$ is null) starts the attempts to join the set of happens-before helpers. Thus, it randomly chooses a first transaction (the $\beginact$ of this transaction is shown in line~\eqref{Equation:HBHTransaction}) through which the process will join the set of happens-before helpers. When a process chose the transaction to join happens-before helpers, that means it has made pledge that during this transaction it will either do read from a variable that was updated by a another delayed transaction from some other process in happens-before helpers or write to a variable that was accessed with a read or write from another process in happens-before helpers. When either one of these criteria are satisfied the flag $p.hbh$ will be set to $\hbhjoin$. If a process does not keep its pledge (the flag $p.hbh$ is null) then before executing the $\commitact$ instruction of the first transaction we block the execution (line~\eqref{Equation:HBHBLOCK}).

The happens-before helpers processes continue executing their instructions, until one of them executes a load that reads from the shared variable $\anaddr$ that was stored in $a_{\attackstore}$ which implies the existence of a happens-before cycle.
When executing the instruction $\commitact$ at the end of every transaction, we have a conditional check to detect if we have a load or a write  accessing the variable $\anaddr$ (lines~\eqref{Equation:HBHFINISH1}, \eqref{Equation:HBHFINISH2}, and \eqref{Equation:HBHFINISH3}).
When the check detects that the variable $\anaddr$ was accessed, the execution goes to the error state (line~\eqref{Equation:HBHFINISH3}) to indicate that the execution has produced an anomaly and we denote the reached state of the instrumented program's execution, the error state.

\begin{figure}[htbp]
 \scriptsize
\begin{minipage}{0.49\linewidth}
\begin{eqnarray}
&&\semHbhelperorig{\thetransition{\alab_1}{\alab_2}{\beginact}}=\notag\\
&&\textbf{// Begin before joining happens-before}\notag\\
&&\thetransitionInstrumented{\alab_1}{\alab_{x1}}{\thecondition{\hb{} = \perp \land p.a = \perp}}\notag\\
&&\thetransitionInstrumented{\alab_{x1}}{\alab_2}{\beginact}\notag\\
&&\textbf{// Begin of first transactions to join}\notag\\
&&\textbf{// happens-before}\notag\\
&&\thetransitionInstrumented{\alab_1}{\alab_{x2}}{\thecondition{\hb{} \neq \perp \land   p.hbh =  p.a = \perp}}\notag\\
&&\thetransitionInstrumented{\alab_{x2}}{\alab_{x3}}{\beginact}\label{Equation:HBHTransaction}\\
&&\textbf{// Copying flags before the first transaction}\notag\\
&&\textbf{// starts executing and modifying the flags}\notag\\
&&\thetransitionInstrumented{\alab_{x3}}{\alab_2}{\theforeachassign{\anaddr}{\mathbb{V}}{\anaddr.eventI}{\anaddr.event}}\notag\\
&&\textbf{// Begin after joining happens-before}\notag\\
&&\thetransitionInstrumented{\alab_1}{\alab_{x4}}{\thecondition{\hb{} \neq \perp \land \   p.hbh \neq \perp}}\notag\\
&&\thetransitionInstrumented{\alab_{x4}}{\alab_2}{\beginact}\notag
\end{eqnarray}
 \end{minipage}\hfill
\begin{minipage}{0.49\linewidth}
\begin{eqnarray}
&&\semHbhelperorig{\thetransition{\alab_1}{\alab_2}{\commitact}}=\notag\\
&&\textbf{// Commit before joining happens-before}\notag\\
&&\thetransitionInstrumented{\alab_1}{\alab_{x1}}{\thecondition{\hb{}= \perp \land p.a = \perp}}\notag\\
&&\thetransitionInstrumented{\alab_{x1}}{\alab_2}{\commitact}\notag\\
&&\textbf{// Commit after joining happens-before}\notag\\
&&\thetransitionInstrumented{\alab_1}{\alab_{x2}}{\thecondition{\hb{} \neq \perp \land \  p.hbh \neq \perp}}\notag\\
&&\thetransitionInstrumented{\alab_{x2}}{\alab_{x3}}{\commitact}\notag\\
&&\thetransitionInstrumented{\alab_{x3}}{\alab_{x4}}{\theload{\tilde r}{a_{\attackstore}}}\label{Equation:HBHFINISH1}\\
&&\thetransitionInstrumented{\alab_{x4}}{\alab_{x5}}{\theassign{\tilde r}{\tilde r .event}}\label{Equation:HBHFINISH2}\\
&&\textbf{// Commit after reaching the error state}\notag\\
&&\thetransitionInstrumentedf{\alab_{x5}}{\thecondition{\tilde r \neq \perp}}\label{Equation:HBHFINISH3}\\
&&\thetransitionInstrumented{\alab_{x5}}{\alab_{2}}{\thecondition{\tilde r = \perp}}\notag\\
&&\textbf{// Failed attempt to join happens-before}\notag\\
&&\thetransitionInstrumentedb{\alab_1}{\thecondition{\hb{} \neq \perp \land \  p.hbh = p.a = \perp}}\ \ \ \ \ \ \label{Equation:HBHBLOCK}
\end{eqnarray}
 \end{minipage}
\normalsize
\caption{Instrumentation of Happens-Before Helpers Processes}
\label{Figure:HBHelpersInstrumentationAppendix}
\end{figure}

As a direct consequence of Theorem~\ref{th:SIfinal}, the next corollary states that some programs which have certain characteristics are robust against \sic{}.
\begin{corollary}\label{lemma:SIRobustExe}
Given a program $\aprog$, if one of the following holds:
\begin{enumerate}[label=(\alph*)]
\item  every transaction of $\aprog$ contains a single instruction either a read or a write;
\item  every transaction of $\aprog$ contains only read/write events that access a single variable (different transactions might read/write to different variables);
\item  given a variable $x$, every transaction of $\aprog$ contains a write to the variable $x$.
\end{enumerate}
then $\aprog$ is robust under \sic{}.
\end{corollary}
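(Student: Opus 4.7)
\medskip

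\noindent
\textbf{Proof plan for Corollary \ref{lemma:SIRobustExe}.}

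The plan is to argue each of the three cases by contradiction using the structural characterization of minimal anomalies provided by Theorem \ref{theorem:SIMinViol}. Suppose $\aprog$ is not robust against \sic{}. Then there exists a minimal anomaly of the form $\tau = \alpha\cdot \issueact(\apr,\atr)\cdot \beta\cdot \commitact(\apr,\atr)$ satisfying conditions (a)--(e) of Theorem \ref{theorem:SIMinViol}. The key ingredients I will use repeatedly are: (i) by condition (d), there exist $a,b\in\beta$ and \emph{distinct} variables $x\neq y$ with $(\issueact(\apr,\atr),a)\in\cfo(x)$ and $(b,\commitact(\apr,\atr))\in\cfo(y)$, and (ii) by Lemma \ref{lemma:ViolForm-1} (condition (e) strengthened), \emph{no} event in $\beta$ writes to a variable that $\atr$ writes to.

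For case (a), observe that the first conflict $(\issueact(\apr,\atr),a)\in\cfo(x)$ forces $\atr$ to perform a read on $x$, and the second conflict $(b,\commitact(\apr,\atr))\in\cfo(y)$ forces $\atr$ to perform a write on $y$. Hence $\atr$ contains at least one read and at least one write, contradicting the assumption that every transaction has a single read-or-write instruction. Case (b) proceeds analogously: the same two conflicts demand a read on $x$ and a write on $y$ inside $\atr$ with $x\neq y$, contradicting the assumption that $\atr$ touches only a single variable.

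For case (c), the idea is to use Lemma \ref{lemma:ViolForm-1}. The delayed transaction $\atr$ writes to $x$ (since every transaction of $\aprog$ writes to $x$). By Lemma \ref{lemma:ViolForm-1}, no event in $\beta$ may write to $x$. But every transaction of $\aprog$ writes to $x$, so no transaction can contribute events to $\beta$; equivalently $\beta$ must be empty. This contradicts condition (d) of Theorem \ref{theorem:SIMinViol}, which requires at least one event $a\in\beta$ (and in fact also conflicts with Lemma \ref{lemma:ViolForm0}, which guarantees the existence of an issue event in $\beta$ happening-before $\commitact(\apr,\atr)$).

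The main (and really the only) subtlety will be to make sure the contradictions in cases (a) and (b) are extracted cleanly from the $\cfo$ definition: $(\issueact(\apr,\atr),a)\in\cfo(x)$ means $\atr$ performs an external read on $x$ that is overwritten by the transaction generating $a$, so $\atr$ must contain a read on $x$; dually $(b,\commitact(\apr,\atr))\in\cfo(y)$ means the transaction of $b$ reads a value on $y$ that $\atr$ overwrites, hence $\atr$ writes to $y$. Once this is spelled out, the three cases all collapse to one-line contradictions with the hypotheses of the corollary, and Theorem \ref{th:SIfinal} (via Theorem \ref{theorem:SIMinViol}) closes the argument.
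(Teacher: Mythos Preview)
Your proposal is correct and follows essentially the same approach as the paper. The paper does not give a detailed proof but states the corollary as a direct consequence of Theorem~\ref{th:SIfinal}, noting explicitly that ``the delayed transaction must contain a read and write instructions on different variables'' and that ``the transactions of the happens-before helpers must not contain a write to a variable that the delayed transaction writes to''; you extract these same facts from Theorem~\ref{theorem:SIMinViol} (condition~(d) and Lemma~\ref{lemma:ViolForm-1}) and derive the three contradictions exactly as intended.
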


\section{Proofs of Section~\ref{sec:Instr}: Soundness and Completeness of the Instrumentation} \label{subsec:SCInstr}

The aim of the instrumentation procedure is to reduce the problem of checking the existence of the anomaly described in Theorem \ref{theorem:SIMinViol} to reachability under serialisability of en error state by the instrumented version of a program. The instrumentation procedure is considered sound and complete iff if an error state is reachable, then we can reconstruct an anomaly, and every anomaly ensures that the error state is reachable by the instrumented version of the program.

\begin{theorem}[Soundness and Completeness]
A program $\aprog$ is not robust iff the instrumented version of it, $\aprog'$, reaches an error state under \serc{}.
\end{theorem}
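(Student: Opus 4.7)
The proof will proceed by using Theorem~\ref{theorem:SIMinViol} as the pivot: instead of establishing a direct equivalence between arbitrary non-serializable \sic{} traces and error-reaching serializable executions of the instrumented program, we will translate both sides to and from the canonical shape of a minimal anomaly $\tau=\alpha\cdot \issueact(\apr,\atr) \cdot\beta\cdot \commitact(\apr,\atr)$ identified there. Because exactly one transaction is delayed in such an anomaly, each process can play only one of three disjoint roles (pre-anomaly executor, attacker, or happens-before helper), which matches exactly what the non-deterministic role assignment in $\sem{\aprog}$ is designed to guess. I will prove the two implications separately.

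\textbf{Completeness (non-robust $\Rightarrow$ reaches error).} Starting from a minimal anomaly $\tau$ guaranteed by Theorem~\ref{theorem:SIMinViol}, I will build a matching serializable execution $\rho'$ of $\sem{\aprog}$ step by step. During $\alpha$, every process executes its original code along the $\hb{}=\perp$, $a_{\attacktransaction}=\perp$ branches. At $\issueact(\apr,\atr)$ process $\apr$ enters the attacker branch: it sets $p.a$, $a_{\attacktransaction}$, copies the shared variables to their primed copies, performs its reads/writes on the primed copies, records $a_{\attackstore}=x$ for one written variable $x$, and sets $\hb{}$ on its final read. Each transaction of $\beta$, which by Theorem~\ref{theorem:SIMinViol}(e) does not write to any variable in the write-set of $\atr$, is executed by its process as a happens-before helper; the pledge is met because the first transaction $(\apr_0,\atr_0)$ must conflict with $\atr$ via $\cfo$ (Theorem~\ref{theorem:SIMinViol}(d)) and therefore writes a variable read by $\atr$ (so $\anaddr.eventI=\loadacc$ triggers line~\eqref{Equation:HBHMeetConditionStore}), and any subsequent helper reuses one of $\po,\rfo,\sto,\cfo$ links, each of which is matched by one of the $\anaddr.event$, $\anaddr.eventI$, or $p.hbh$ tests. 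The last helper, being the source of $((\apr_1,\atr_1),\commitact(\apr,\atr))\in\cfo$, reads $x=a_{\attackstore}$; the post-commit check at lines \eqref{Equation:HBHFINISH1}--\eqref{Equation:HBHFINISH3} then forces the error state.

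\textbf{Soundness (reaches error $\Rightarrow$ non-robust).} Conversely, take any serializable execution $\rho'$ of $\sem{\aprog}$ reaching error. I will strip the auxiliary assignments (to $\hb{}$, $p.a$, $p.hbh$, $a_{\attacktransaction}$, $a_{\attackstore}$, $\anaddr.event$, $\anaddr.event'$, $\anaddr.eventI$, primed variables) and reinterpret the attacker's transaction as a delayed \sic{} transaction whose commit is postponed until right after the helper whose read of $x$ triggered the error. The helpers' writes, unlike the attacker's, were committed to the real variables, so the resulting sequence of events is a legal \sic{} execution of $\aprog$; the guard at line \eqref{Equation:HBHBlockStore} ensures no helper wrote a variable in $\atr$'s write-set, so the final commit of $\atr$ does not fail the \sic{} write-write check, and the guard at line \eqref{Equation:ABLOCK} forces $\hb{}$ to have been set, meaning the chain is non-empty. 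The pledges and the accumulated flags give precisely the $(\po \cup \rfo \cup \sto \cup \cfo)$ links from $\issueact(\apr,\atr)$ through the helpers to $\commitact(\apr,\atr)$; by Theorem~\ref{th:acyclicity} the trace is non-serializable, so $\aprog$ is not robust.

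\textbf{Main obstacle.} The delicate part is verifying, in the completeness direction, that each happens-before edge $\hbo^{1}$ between successive transactions in $\beta$ is actually detectable by one of the four pledge clauses of Section~\ref{subsec:HBHelpers}, and conversely, in the soundness direction, that every flag update along $\rho'$ corresponds to a \emph{real} dependency in the reconstructed \sic{} trace rather than a spurious one introduced by the bookkeeping. In particular one must show that the snapshot $\anaddr.eventI$ (captured at $\beginact$) is what makes the instrumentation faithful to the atomicity of transactions: a helper must see only the accesses committed \emph{before} its own transaction began, exactly as under serializability for intermediate helpers and under \sic{} for the attacker's delayed transaction. Once this correspondence between flags and dependency edges is established, both directions reduce to routine induction on the length of $\beta$ (respectively the suffix of $\rho'$ following the attacker's commit branch).
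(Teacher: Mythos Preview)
Your proposal is correct and follows essentially the same approach as the paper's own proof: both directions pivot on the single-delay characterization of Theorem~\ref{theorem:SIMinViol}, with completeness simulating the minimal anomaly inside $\sem{\aprog}$ (attacker for $\atr$, helpers for $\beta$) and soundness stripping the bookkeeping from an error-reaching run to recover a \sic{} trace with one delayed transaction whose happens-before cycle is witnessed by the flag chain. Your explicit identification of the $\anaddr.eventI$ snapshot as the mechanism aligning the pledge checks with transaction atomicity is a useful refinement that the paper leaves implicit, but the overall structure and the key ideas coincide.
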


\begin{proof}

\textbf{Soundness.}
Suppose that the instrumented program reaches an error state.
Then, the execution's trace of the instrumented program is of the form:
$$\tau_\errorstate = \tau_1\cdot \issueact(\apr,\atr) \cdot\tau_2\cdot (\apr',\atr') $$
The last transaction, $(\apr',\atr')$ performed by a process $\apr''$ that has a read accessing the variable $x = a_{\attackstore}$ and is part of the happens-before helpers.
Because the conditional check can be performed only by a process ($\apr_{HbH1}$) that is one of the happens-before helpers and is currently executing.

In order for $\apr_{HbH1}$, to join the set of happens-before helpers, it must have found that the valuation of the flag $\hb{}$ is not null which means there exists some process $\apr$ that is the attacker that sets the flag $\hb{}$ to $\mytrue$. In $\tau_1$, the attacker, happens-before helpers, and other processes start executing the original instructions without setting any flags or delaying any transactions.
Afterwards, the attacker issues the delayed transaction $\issueact(\apr,\atr)$ and it starts populating the primed variables $x'$ and reading from them and setting the flags $x.event'$ to $1$ for every variable $x$ that it writes to and $y.event$ to $\loadacc$ for every variable $y$ that it reads from. During the execution of $\atr$, the attacker sets the flag $\hb{}$ to $\mytrue$. Hence, the happens-before helpers start checking at every instruction whether the flags $x.event$ are set to either $\storeacc$ or $\loadacc$. If so, they start populating the flags $x.event$ and $\alock.event$  as well. When $\hb{}$ is set to $\mytrue$, the attacker stop issuing new transactions. Therefore, all transaction in $\tau_2$ are from the happens-before helpers.

We now transform $\tau_\errorstate$ into the following execution trace:
$$\tau = \tau'_1\cdot \issueact(\apr,\atr) \cdot\tau'_2\cdot \commitact(\apr,\atr)$$

Here, $\tau_1'$ is the subsequence of all $\tau_1$ events that are produced by instructions from $\aprog$ without the conditionals checking (i.e., the assume statements).
The transaction $\atr$ which is executed by the attacker represents the delayed transactions in $\tau$ with the removal of the conditionals checking and the flags setting.
$\tau_2'$ is the subsequence of all events of $\tau_2$ produced by transactions from $\aprog$ which are executed only by the happens-before helpers except the conditionals checking and the flags setting.
We add the commit of transaction $\commitact(\apr_0,\atr)$ to describe the commit of the delayed transaction that was delayed by the attacker. $\tau$ is a possible execution's trace of the program $\aprog$ because $\tau_\errorstate$ is result of an execution of the instrumented version of $\aprog$ and we have removed from $\tau$ all the effects of the instrumentation, and replaced the stores to auxiliary variables by issues of stores without changing the dependency between all the events in the execution.

All transactions in $\tau_2'$ are from the happens-before helpers. Transactions in $\tau_2'$ form a happens-before path between $\issueact(\apr,\atr)$ and $\commitact(\apr,\atr)$. Also, we have $a,\ b=(\apr',\atr') \in\tau_2'$ such that $(\issueact(\apr,\atr),a)\in \cfo(y)$ and $(b,\commitact(\apr,\atr)) \in \cfo(x)$. No transaction in $\tau_2'$ writes to a variable that $\atr$ writes to.
Hence, $\tau$ indeed holds all the properties of the anomaly described in Theorem \ref{theorem:SIMinViol}.\\[0.2cm]
\textbf{Completeness.}
Suppose we have an anomaly of a given program $\aprog$:
$$\tau = \tau_1\cdot \issueact(\apr,\atr) \cdot\tau_2\cdot \commitact(\apr,\atr)$$
such that $\tau$ maintains all the properties given in Theorem \ref{theorem:SIMinViol}. We demonstrate that there is a possible serializable execution based on $\tau$ of the instrumented version of the program $\aprog$ that reaches the error state. Next, we show how to build the instrumented program execution.

At the start of the execution, $\tau_1$, the attacker, happens-before helpers, and other processes execute the original transactions with just conditional checks.

Afterwards, the attacker delays the transaction $\issueact(\apr,\atr)$ and starts populating the flags. In $\issueact(\apr,\atr)$, the attacker issues a store to the shared variable $`x`=a_{\attackstore}$ and $\exists\ b\in\tau_2$ such that $(b,\commitact(\apr,\atr)) \in \cfo(x)$. All writes that were executed in $\atr$ by the attacker are invisible to the remaining processes which includes the happens-before helpers. While executing $\atr$, the attacker sets the content of the flag $y.event$ to $\loadacc$ for every variable $y$ that it reads from and it sets the flag $\hbo{}$ to $\mytrue$.

On the other hand, the processes which are executing their transactions without delaying them will attempt to join the happens-before helpers by checking if the flag $\hbo{}$  is set to $\mytrue$. If so, they start the attempt of joining the happens-before helpers and when it succeed they joining the happens-before helpers and  start executing their transactions which constitute $\tau_2$. The first executed transaction by the happens-before helpers is $a$ described above which signals the start of $\tau_2$ and the happen before dependency. Thus, in $\tau_2$, we have only transactions form the happens-before helpers (because the attacker stop when the flag $\hbo{}$ is set to $\mytrue$) such that they are related by the happen before dependency that started from $\issueact(\apr,\atr)$ until it reaches $\commitact(\apr,\atr)$ through $\tau_2$. We know that there must exist $b\in\tau_2$ such that $(b,\commitact(\apr,\atr))\in \cfo(`x`=a_{\attackstore})$. $b$ is equivalent to the last executed transaction by the happens-before helpers that accesses the shared variable $x$. Thus, the underlying happens-before helper will set the content of the flag $x.event$ to $\loadacc$. Hence, when the underlying process executes the $\commitact$ instruction of this transaction, it will go to the error state (lines \eqref{Equation:HBHFINISH1}, \eqref{Equation:HBHFINISH2}, and \eqref{Equation:HBHFINISH3}) and in this case the instrumented version of the program $\aprog$  has reached the desired error state.
\end{proof}

\section{Proofs of Section~\ref{sec:PPR}}

The following theorem shows that the acyclicity of the commutativity dependency graph of a program implies the robustness of the program. Actually, the notion of robustness in this theorem relies on a slightly different notion of trace where store-order and write-order dependencies take into account values, i.e., store-order relates only transactions writing different values and the write-order relates a reading transaction to the oldest transaction (w.r.t. execution order) writing its value. In more details, we assume that two transactions are $\sto$-related iff they write different values. Notice that since $\cfo$ is defined using $\sto$ then when two transactions are $\sto$-related iff they write different values then when two transactions are $\cfo$-related this implies that the value that is read is different than the one that is written. For example, in Figure \ref{fig:robsto}, if we don't have this weakening of $\sto$ we get an execution trace, where the happens-before is cyclic, of the form $\tau = \issueact(\apr_1,\atr_1) \cdot (\apr_2,\atr_2) \cdot (\apr_3,\atr_3) \cdot \commitact(\apr_1,\atr_1)$ because of the $\sto$ relation that links $(\apr_2,\atr_2)$ and $(\apr_3,\atr_3)$. However, with our weakening of $\sto$ there will be no $\sto$ relation between $(\apr_2,\atr_2)$ and $(\apr_3,\atr_3)$ and this the above execution can be equivalently rewritten as $\tau = (\apr_3,\atr_3) \cdot  \issueact(\apr_1,\atr_1) \commitact(\apr_1,\atr_1) \cdot (\apr_2,\atr_2)$ which is serializable.
Similarly, we assume that two transactions $\atr_1$ and $\atr_2$ are related by $\rfo$ iff when we swap the two transactions $\atr_2$ does not read the same value that $\atr_1$ is writing. For instance, in Figure \ref{fig:robrfo}, if we don't have this assumption we get an execution trace, where the happens-before is cyclic, of the form $\tau = \issueact(\apr_1,\atr_1) \cdot (\apr_2,\atr_2) \cdot (\apr_3,\atr_3) \cdot \commitact(\apr_1,\atr_1)$ because of the $\rfo$ relation that links $(\apr_2,\atr_2)$ and $(\apr_3,\atr_3)$. However, with our assumption on $\rfo$ there will be no $\rfo$ relation between $(\apr_2,\atr_2)$ and $(\apr_3,\atr_3)$ and this the above execution can be equivalently rewritten as $\tau = (\apr_3,\atr_3) \cdot  \issueact(\apr_1,\atr_1) \commitact(\apr_1,\atr_1) \cdot (\apr_2,\atr_2)$ which is serializable.
This approach helps in avoiding some of the harmless anomalies, where the happens before cycle might be caused by a write-write dependency between two transactions that writes the same values.
\begin{figure}[t]
\lstset{basicstyle=\ttfamily\scriptsize}
\begin{subfigure}{59mm}
\begin{minipage}[c]{19mm}
\begin{lstlisting}[language=Java10]
    p1:
t1: [r1 = y //0
     x = 1]
\end{lstlisting}
\end{minipage}
\begin{minipage}[c]{2mm}
\footnotesize{$||$}
\end{minipage}
\begin{minipage}[c]{15mm}
\begin{lstlisting}[language=Java10]
    p2:
t2: [y = 1
     z = 1]
\end{lstlisting}
\end{minipage}
\begin{minipage}[c]{2mm}
\footnotesize{$||$}
\end{minipage}
\begin{minipage}[c]{17mm}
\begin{lstlisting}[language=Java10]
    p3:
t3: [z = 1
     r3 = x] //0
\end{lstlisting}
\end{minipage}
\caption{Robust program when weakening $\sto$.}
\label{fig:robsto}
\end{subfigure}
\hspace{5mm}
\begin{subfigure}{59mm}
\begin{minipage}[c]{19mm}
\begin{lstlisting}[language=Java10]
    p1:
t1: [r1 = y //0
     x = 1]
\end{lstlisting}
\end{minipage}
\begin{minipage}[c]{2mm}
\footnotesize{$||$}
\end{minipage}
\begin{minipage}[c]{15mm}
\begin{lstlisting}[language=Java10]
    p2:
t2: [y = 1
     z = 0]
\end{lstlisting}
\end{minipage}
\begin{minipage}[c]{2mm}
\footnotesize{$||$}
\end{minipage}
\begin{minipage}[c]{17mm}
\begin{lstlisting}[language=Java10]
    p3:
t3: [r2 = z //0
     r3 = x] //0
\end{lstlisting}
\end{minipage}
\caption{Robust program when weakening $\rfo$.}
\label{fig:robrfo}
\end{subfigure}
\caption{Examples of robust programs.}
\end{figure}

\begin{theorem}
For a program $\aprog$, if (1) the commutativity dependency graph of $\aprog$ does not contain non-mover cycles, then (2) $\aprog$  is robust.
\end{theorem}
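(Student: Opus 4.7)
The plan is to proceed by contrapositive: starting from the assumption that $\aprog$ is not robust against \sic{}, I will exhibit a non-mover cycle in the commutativity dependency graph. The main tool is the characterization of minimal anomalies given by Theorem~\ref{theorem:SIMinViol}, which already has exactly the right ``shape'' to match the definition of a non-mover cycle.

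Concretely, from the assumption of non-robustness, Theorems~\ref{th:SIfinal} and~\ref{theorem:SIMinViol} provide a minimal anomaly of the form $\tau = \alpha \cdot \issueact(\apr,\atr_0) \cdot \beta \cdot \commitact(\apr,\atr_0)$, where $\atr_0$ is the unique delayed transaction, $\issueact(\apr,\atr_0)$ happens-before $\commitact(\apr,\atr_0)$ through $\beta$, the chain starts with a $\cfo$ edge and ends with a $\cfo$ edge, and no transaction in $\beta$ writes to a variable written by $\atr_0$. The first step is to extract from this happens-before chain a sequence of transactions $\atr_1,\ldots,\atr_n \in \beta$ such that $(\issueact(\apr,\atr_0),\atr_1)\in\cfo$, $(\atr_n,\commitact(\apr,\atr_0))\in\cfo$, and for each $i\in[1,n-1]$, $(\atr_i,\atr_{i+1})\in\hbo^{1} = \po\cup\rfo\cup\sto\cup\cfo$.

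The second step is to construct a companion serializable execution that witnesses the non-mover edges. Concretely, I consider the serializable execution $\tau^{*} = \alpha \cdot (\atr_0\setminus\{w\}) \cdot \beta \cdot (\atr_0\setminus\{r\})$, in which the issue event of the delayed transaction is replaced by the write-free variant $\atr_0''=\atr_0\setminus\{w\}$ (which preserves exactly the reads, matching the values originally read from the snapshot) and the commit event is replaced by the read-free variant $\atr_0'=\atr_0\setminus\{r\}$ (which retains only the writes, supplied through the non-deterministic reads with the same values as in $\atr_0$). Since $\atr_0$ does not see any of the writes of $\beta$ in the anomaly, this is a valid serializable execution of the extended program that contains $\atr_0''$ and $\atr_0'$.

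The third step is the translation of each happens-before edge of $\tau$ into an edge of the commutativity dependency graph evaluated on $\tau^{*}$. For a $\po$ edge, the CDG already contains the corresponding program order edge. For an $\rfo$ (resp., $\sto$, $\cfo$) edge from $\atr_i$ to $\atr_{i+1}$, I argue that swapping the two transactions in $\tau^{*}$ yields either an invalid execution or one reaching a different end state, thereby witnessing that $\atr_i$ is not a right-mover because of $\atr_{i+1}$ through the corresponding dependency; this is precisely an $\mrfo$, $\msto$, or $\mcfo$ edge of the CDG. The boundary cases are handled the same way: the initial $\cfo$ edge between $\issueact(\apr,\atr_0)$ and $\atr_1$ becomes a $\mcfo$ edge between $\atr_0''$ and $\atr_1$ in the CDG (the reads survive in $\atr_0\setminus\{w\}$), and the final $\cfo$ edge becomes a $\mcfo$ edge between $\atr_n$ and $\atr_0'$ (the writes survive in $\atr_0\setminus\{r\}$). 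The condition that no transaction in the chain writes to a variable of $\atr_0$ is inherited directly from Theorem~\ref{theorem:SIMinViol}, giving the required non-mover cycle.

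The hard part will be the third step, and in two specific places. First, one must argue that the value-sensitive $\sto$ and $\rfo$ refinements (mentioned in the discussion preceding the theorem and illustrated by Figures~\ref{fig:robsto} and~\ref{fig:robrfo}) still yield genuine non-mover witnesses, i.e., that after weakening the dependencies the anomaly still produces a non-trivial state difference after swapping adjacent transactions; without this care, an apparent happens-before cycle might correspond to transactions that actually commute and therefore not to CDG edges. Second, one must check that the read-free and write-free variants behave as intended when spliced into a real execution: that the reads of $\atr_0$ in $\tau$ yielded by the initial snapshot can indeed be reproduced as reads of $\atr_0''$ from the global state reached at that point in $\tau^{*}$, which uses the fact (from the anomaly characterization) that no transaction in $\beta$ writes to a variable read by $\atr_0$ along the critical edges. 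Once these two technicalities are discharged, the non-mover cycle in the CDG is immediate, completing the contrapositive.
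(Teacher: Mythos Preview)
Your approach is essentially the same as the paper's: a contrapositive argument that, starting from the minimal-anomaly characterization, extracts the happens-before chain through $\beta$ and converts each link into a non-mover edge of the commutativity dependency graph by exhibiting serializable witness executions in which swapping the two relevant transactions changes the outcome. The paper phrases the argument via the instrumentation execution of Theorem~\ref{th:SIfinal} rather than Theorem~\ref{theorem:SIMinViol} directly, and it builds a separate prefix execution for each edge (e.g.\ $\alpha\cdot(\apr,\atr'')\cdot(\apr_1,\atr_1)$ for the first $\mcfo$ edge and $\alpha\cdot(\apr_1,\atr_1)\cdot\beta\cdot(\apr_n,\atr_n)\cdot(\apr,\atr')$ for the last) rather than your single companion $\tau^{*}$, but these are cosmetic differences.

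One small correction: your stated justification for why the reads of $\atr_0'' = \atr_0\setminus\{w\}$ in $\tau^{*}$ match those of $\atr_0$ in $\tau$ is wrong. It is \emph{not} the case that no transaction in $\beta$ writes to a variable read by $\atr_0$; on the contrary, the very first $\cfo$ edge of the chain says precisely that $\atr_1$ writes to such a variable. The correct reason is simpler: in $\tau^{*}$ you placed $\atr_0''$ immediately after $\alpha$ and \emph{before} $\beta$, so its reads see exactly the state that $\atr_0$'s snapshot saw in the anomaly; what $\beta$ later overwrites is irrelevant at that point. This slip does not affect the validity of your plan.
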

\begin{proof}
We prove the contrapositive, i.e., $\neg(2)\Rightarrow\neg(1)$. For the proof, we use the result of Theorem \ref{th:SIfinal}.

Assuming that the program $\aprog$ is not robust. Then, based on Theorem \ref{th:SIfinal} there must exist an execution of the instrumentation of $\aprog$ that reaches the error state. We suppose that $\atr$ is the delayed transaction, $\atr_{ins}$ is the instrumentation of $\atr$ (writes are stored in auxiliary registers), and $\apr$ is the attacker process.  Therefore, the  execution of the instrumentation of $\aprog$ that reaches the error state is of the form $\tau=\alpha\cdot (\apr,\atr_{ins}) \cdot a \cdot\beta\cdot b$ where $a$ writes to a variable that $\atr$ reads from and $b$ reads from a variable that $\atr$ writes to.
We assume that $b$ is the first event that does read that accesses a variable that $\atr$ writes to.
In the following we show that  the commutativity dependency graph of $\aprog$ contains a non-mover cycle where $\atr$ is $\atr_0$. We consider two cases, first case
when $a=b$ and $\beta = \epsilon$, and second case is when $a \neq b$.

First case: $\tau=\alpha\cdot (\apr,\atr_{ins}) \cdot a $ where $a$ writes to a variable that $\atr$ reads from, reads from a variable that $\atr$ writes to, and does not write to a variable that $\atr$ writes to. Assume that $a = (\apr_1,\atr_1)$.
Thus, we can safely obtain $\tau_0=\alpha\cdot  (\apr_1,\atr_1) \cdot (\apr,\atr')$ a serializable execution trace of $\aprog$ where $\atr'$ is the reads free instantiation of $\atr$. Since $((\apr_1,\atr_1),\commitact(\apr,\atr)) \in \cfo$ then $\atr_1$ reads a value that $\atr'$ is overwriting with a different value. Therefore, $\tau'_0=\alpha\cdot(\apr,\atr')\cdot(\apr_1,\atr_1)$  is either a serializable execution with a different end state than $\tau_0$ has or it is not an serializable execution. Thus, $(\atr_1,\atr') \in \mcfo$ and $\atr_1$ does not write to a variable that $\atr$ writes to. Similarly, we can safely obtain $\tau_n=\alpha\cdot  (\apr,\atr'') \cdot (\apr_1,\atr_1)$ a serializable execution trace of $\aprog$ where $\atr''$ is the writes free instantiation of $\atr$. Since $(\issueact(\apr,\atr),(\apr_1,\atr_1)) \in \cfo$ then $\atr''$ reads a value that $\atr_1$ is overwriting with a different value. Therefore, $\tau'_n=\alpha\cdot(\apr_1,\atr_1)\cdot(\apr,\atr'')$  is either a serializable execution with a different end state than $\tau_n$ has or it is not an serializable execution. Thus, $(\atr'',\atr_1) \in \mcfo$ and $\atr_1$ does not write to a variable that $\atr$ writes to.

Second case: $\tau=\alpha\cdot (\apr,\atr_{ins}) \cdot a \cdot \beta \cdot b $ where $a$ writes to a variable that $\atr$ reads from, $b$ reads from a variable that $\atr$ writes to, and every transaction in $a\cdot\beta\cdot b$ does not write to a variable that $\atr$ writes to. Assume that $a = (\apr_1,\atr_1)$ and $b= (\apr_n,\atr_n)$. Since the transactions in $(\apr_1,\atr_1) \cdot \beta \cdot (\apr_n,\atr_n)$ constitute the happens-before path in the execution trace $\tau$. Then, for every $(\apr_i,\atr_i),\ (\apr_{i+1},\atr_{i+1}) \in (\apr_1,\atr_1) \cdot \beta \cdot (\apr_n,\atr_n)$ we have that $((\apr_i,\atr_i),(\apr_{i+1},\atr_{i+1})) \in (\po \cup \rfo \cup \sto \cup \cfo)$. In the case $((\apr_i,\atr_i),(\apr_{i+1},\atr_{i+1})) \in (\rfo \cup \sto \cup \cfo)$, we can safely obtain $\tau_i=\alpha\cdot \gamma \cdot (\apr_i,\atr_i) \cdot (\apr_{i+1},\atr_{i+1})$ which is a serializable execution trace of $\aprog$ where $\gamma$ either empty (i.e., $\epsilon$) or $\gamma = (\apr_1,\atr_1) \cdot \cdots \cdot (\apr_{i-1},\atr_{i-1})$.  Since, $((\apr_i,\atr_i),(\apr_{i+1},\atr_{i+1})) \in (\rfo \cup \sto \cup \cfo)$, then swapping $\atr_i$ and $\atr_i+1$ will result in either reordering of writes or write overwrites a read, or read obtains a different value. Therefore, $\tau'_i=\alpha\cdot \gamma \cdot (\apr_{i+1},\atr_{i+1})\cdot (\apr_i,\atr_i) $  is either a serializable execution trace with a different end state than $\tau_i$ has or it is not an serializable execution trace. Thus, $(\atr_i,\atr_{i+1}) \in \mcfo$. Also, we have that $\atr_{i}$ and $\atr_{i+1}$ do not write to a variable that $\atr$ writes to. Similar to the first case, we can safely obtain $\tau_0=\alpha\cdot  (\apr_1,\atr_1)\cdot \beta \cdot (\apr_n,\atr_n) \cdot (\apr,\atr')$ a serializable execution trace of $\aprog$ where $\atr'$ is the reads free instantiation of $\atr$. Since $((\apr_n,\atr_n),\commitact(\apr,\atr)) \in \cfo$ then $\atr_n$ reads a value that $\atr'$ is overwriting with a different one. Therefore, $\tau'_0=\alpha\cdot  (\apr_1,\atr_1)\cdot \beta \cdot (\apr,\atr')\cdot (\apr_n,\atr_n) $  is either a serializable execution with a different end state than $\tau_0$ has or it is not an serializable execution trace. Thus, $(\atr_n,\atr') \in \mcfo$ and $\atr_n$ does not write to a variable that $\atr$ writes to. Furthermore, we can safely obtain $\tau_n=\alpha\cdot  (\apr,\atr'') \cdot (\apr_1,\atr_1)$ a serializable execution  trace of $\aprog$ where $\atr''$ is the writes free instantiation of $\atr$. Since $(\issueact(\apr,\atr),(\apr_1,\atr_1)) \in \cfo$ then $\atr''$ reads a value that $\atr_1$ is overwriting with a different one. Then, $\tau'_n=\alpha\cdot(\apr_1,\atr_1)\cdot(\apr,\atr'')$  is either a serializable execution trace with a different end state than $\tau_n$ has or it is not an serializable execution trace. Thus, $(\atr'',\atr_1) \in \mcfo$ and $\atr_1$ does not write to a variable that $\atr$ writes to.




\end{proof}

\section{Experiments Appendix}\label{sec:Exper}

In this section we describe the applications we used to evaluate our techniques.
For every table in the original application program, we added a boolean annotation in our formalization of the table in Boogie, in order,
to inspect whether a given record does exist in the table or not. For instance, if we consider the following table \emph{Customer(CustomerId, CustomerName)}, in Boogie, we formalize the table
as two maps: \emph{CustomerAlive} which takes a \emph{CustomerId} and returns \emph{true} if there is a customer with underlying id and \emph{else} otherwise,  \emph{CustomerTable} which takes a \emph{CustomerId} and returns the corresponding \emph{CustomerName}. Also, in certain cases we formalize a Table as multiple maps.
Below, we describe each application.

\smallskip
\noindent
{\bf Auction \cite{DBLP:conf/concur/0002G16}:}  Which has five transactions that manipulate three tables: BIDS, ITEMS, and USERS.
Transaction RegBid is for placing a bid on an item. Transaction RegUser is for registration a user. Transaction ViewItem is for viewing the number of bids for an item.
Transaction ViewUser is for looking at a user name. Transaction ViewUsers is for looking at all registered users.

\smallskip
\noindent
{\bf Cassieq-Core\footnote{https://github.com/paradoxical-io/cassieq}:} A core unit of a distributed queue. It has eight transactions that manipulate a single table: USERACCOUNTS.
Transaction AddNewAccount is for adding a new account in USERACCOUNTS.
Transaction DeleteAnAccount is for deleting an account from USERACCOUNTS.
Transaction AddNewKey is for adding a new key to an existing account in USERACCOUNTS.
Transaction DeleteAKey is removing a key from an existing account in USERACCOUNTS.
Transaction GetAnAccount is to check whether there exist an account with a given id in the table USERACCOUNTS.
Transaction GetAccounts is to return all existing accounts in USERACCOUNTS.
Transaction GetAccountKeys is for getting all the keys of certain account in USERACCOUNTS.
Transaction GetAccountKey is to check whether a certain account does hold a certain key in USERACCOUNTS.

\smallskip
\noindent
{\bf Courseware \cite{DBLP:conf/popl/GotsmanYFNS16,DBLP:conf/concur/NagarJ18}:} Which has five transactions that manipulate three tables: STUDENT, COURSE, and ENROLED.
Transaction RegisterStudent is for registering a new student in the table STUDENT.
Transaction AddCourse is for adding a new course in the table COURSE.
Transaction EnrollStudent is for enrolling a given registered student in a given course.
Transaction RemoveCourse is for removing a given course from the table COURSE.
Transaction QueryCourses is for inspecting courses in the table COURSE.

\smallskip
\noindent
{\bf Currency-Exchange\footnote{https://github.com/Haiyan2/Trade}:} A trading service. It has six transactions that manipulate a single table: TRADES.
Transaction SaveTrade is for registering a new trade.
Transaction ViewListTrades is for viewing the trades that occurred before a given instance of time.
Transaction ViewTrade is for inspecting a given trade.
Transaction ViewTradeUser is for looking for a user who carried out a given trade.
Transaction GetNbTrades is for inspecting the number of trades.
Transaction GetTradeTimeStamp is for looking for the time stamp of a given trade.

\smallskip
\noindent
{\bf FusionTicket \cite{DBLP:conf/cloud/HoltBZPOC16}:}  Which has four transactions that manipulate a single table: EVENTS.
Transaction AddEvent is adding new event in some given venue. Transaction ViewEvent is for looking at given a event and the number of tickets available at this event.
Transaction Browse is for looking at events that are planned in some given venue. Transaction Purchase is for buying a ticket at a certain event.

\smallskip
\noindent
{\bf Shopping-Cart\footnote{https://github.com/nikhilswagle/Shopping_Cart_Angular_Cassandra}:} An on-line shop service implemented over Cassandra. It has four transactions that query two tables: USERS and PRODUCTS.
Transaction GetUser is for querying the existence of a user in the table USERS.
Transaction GetProductsByCategory is for finding products that are in a given category.
Transaction GetProductByUPC is for finding a product through its UPC.
Transaction GetCategories is for finding the categories.

\smallskip
\noindent
{\bf Playlist\footnote{https://github.com/DataStaxDocs/playlist}:} An on-line music service.  It has fourteen transactions that manipulate three tables: USERS, TRACKS, and ARTISTS.
Transaction AddTrack is for adding a new track in the table TRACKS.
Transaction GetTrack is for getting a certain track from the table TRACKS.
Transaction AddUser is for adding a new user in the table USERS.
Transaction GetUser is for looking for a certain user in the table USERS.
Transaction CreatePlayList is for creating a playlist of certain user in the table USERS.
Transaction ListArtistByLetter is for listing artists by their first letters of their names.
Transaction ListSongsByArtist is for listing tracks produced by certain artist.
Transaction ListSongsByGenre is for listing tracks of certain genre type.
Transaction AddTrackToPlaylist is for adding an existing track (in TRACKS) to an existing user play list in the table USERS.
Transaction DeleteTrackFromPlaylist is for removing a track from user play list in the table USERS.
Transaction GetPlaylistForUser is for getting the contents of certain play list of certain user.
Transaction GetPlaylistNames is for getting all the play lists of certain user.
Transaction DeletePlayListForUser is for deleting a certain user's play list.
Transaction DeleteUser is deleting a user from the table USERS.

\smallskip
\noindent
{\bf RoomStore\footnote{https://github.com/mebigfatguy/roomstore}:} A messages bot service.  It has five transactions that manipulate a single table: MESSAGES.
Transaction AddMessage is for adding a new message to the table MESSAGES.
Transaction GetLastMessage is for getting the messages of given user.
Transaction GetMessages is for looking for messages that were added in a certain date.
Transaction GetSpecificMessage is for getting specific message that was added in a certain date and time.
Transaction GetTopicMessages is for getting messages that are of certain topic.

\smallskip
\noindent
{\bf SmallBank \cite{DBLP:conf/icde/AlomariCFR08}:} Which has five transactions that manipulate three tables: ACCOUNT, SAVING, and CHECKING.
Transaction Balance is for looking at both the saving and checking balances of a given user account.
Transaction DepositChecking is for depositing a certain amount into the checking balance.
Transaction TransactSaving is for depositing or withdrawing into/form the saving balance.
Transaction Amalgamate (Amg) is for moving the saving and checking balances of an account to another account checking balance and
resetting the saving and checking balances of the first account to zero.
Transaction WriteCheck is for  withdrawing from a given account's checking balance.

\smallskip
\noindent
{\bf TPC-C \cite{TPCC}:} Which has five transactions that manipulate nine tables: WAREHOUSE, DISTRICT, STOCK, ITEMS, CUSTOMERS, HISTORY, ORDER, NEWORDER, and ORDERLINE.
Transaction NewOrder is for placing a new order on a set of items.
Transaction Delivery is for delivering a withstanding order at certain warehouse.
Transaction Payment is for a given customer paying a withstanding amount of credit.
Transaction OrderStatus is for inspecting certain orders and the associated order lines.
Transaction StockLevel is for inspecting stocks at certain warehouse and the withstanding orders at this warehouse.

\end{document}